\newtheorem{theorem}{Theorem}
\newtheorem{corollary}{Corollary}
\newtheorem{proof}{Proof}
\newtheorem{proposition}{Proposition}
\begin{document}

	\title{Secure and Energy-Efficient Transmissions in Cache-Enabled Heterogeneous Cellular Networks: Performance Analysis and Optimization}
	\author{Tong-Xing~Zheng,~\IEEEmembership{Member,~IEEE,}~Hui-Ming~Wang,~\IEEEmembership{Senior~Member,~IEEE,~}and~Jinhong~Yuan,~\IEEEmembership{Fellow,~{IEEE}}
		\thanks{T.-X.~Zheng and H.-M. Wang are with the School of Electronic and Information Engineering,
			Xi'an Jiaotong University, Xi'an 710049, China, and
			also with the Ministry of Education Key Laboratory for Intelligent Networks
			and Network Security, Xi'an Jiaotong University, Xi'an 710049, China (e-mail: zhengtx@mail.xjtu.edu.cn, xjbswhm@gmail.com).
		}
		\thanks{J. Yuan is with the School of Electrical Engineering and Telecommunications, University of New South Wales, Sydney, NSW 2052, Australia (e-mail: j.yuan@unsw.edu.au).}
	}
	
	\maketitle
	\vspace{-0.8 cm}
	
\begin{abstract}
This paper studies physical-layer security for a cache-enabled heterogeneous cellular network comprised of a macro base station and multiple small base stations (SBSs). We investigate a joint design on caching placement and file delivery for realizing secure and energy-efficient transmissions against randomly distributed eavesdroppers. 
We propose a novel hybrid ``most popular content'' and ``largest content diversity'' caching placement policy to distribute files of different popularities. 
Depending on the availability and placement of the requested file, we employ three cooperative transmission schemes, namely, distributed beamforming, frequency-domain orthogonal transmission, and best SBS relaying, respectively.
We derive analytical expressions for the connection outage probability and secrecy outage probability for each transmission scheme. 
Afterwards, we design the optimal transmission rates and caching allocation successively to achieve a maximal overall secrecy throughput and secrecy energy efficiency, respectively.
Numerical results verify the theoretical analyses and demonstrate the superiority of the proposed hybrid caching policy.
\end{abstract}
	
\begin{IEEEkeywords}
	Physical-layer security, wireless caching, heterogeneous cellular network, cooperative transmission, secrecy outage probability, energy efficiency, stochastic geometry.
\end{IEEEkeywords}
	
\IEEEpeerreviewmaketitle
	
\section{Introduction}

\IEEEPARstart{H}{uman} society is striding into the era of  Internet-of-Everything (IoE), and the amount of wireless data traffic is expected to soar roughly 1000$\times$ in the coming decade. 
This poses an unprecedented challenge to backhaul links and backhaul capacity becomes a crucial system bottleneck.
Against this background, wireless caching techniques, via which popular content can be pre-stored at the edge of a wireless network before being requested by users, have emerged as a promising approach for relieving the backhaul bottleneck.
Recently, wireless caching has been substantially proven to be highly effective in aspects of alleviating  the enormous backhaul demand
\cite{Wang2014Cache, Xiang2017Cross}, {reducing the file service latency \cite{Shanmugam2013FemtoCaching}-\cite{Friedlander2018Generalization}}, increasing the successful delivery probability \cite{Chen2017Cooperative}, and improving the energy efficiency \cite{Liu2014Cache,Liu2014Will}, etc.

Information security is a fundamental concern for a cache-enabled wireless network.
Unlike early caching techniques applied for wired networks such as the Internet, wireless caching suffers a serious security vulnerability because of the openness of physical media. 
{Using traditional cryptographic encryption solely to protect content secrecy against malicious eavesdroppers will encounter three major difficulties. First and foremost, with the rapid development of quantum computing and data analytics, the computing-complexity based cryptography is under severe threat since a confidential message could easily be decrypted by brute force when eavesdropper has an extremely high level of computing power. In the second place, the encrypted content is tailored uniquely for each user request and cannot be reused for other user requests \cite{Paschos2016Wireless}. Last but not the least, the storage, management, and distribution of secret keys are troublesome to implement in emerging wireless networks due to the increasingly dynamic and large-scale network topologies \cite{Poor2012Information}. 
All these shortcomings might outweigh the advantages of wireless caching such as high flexibility, multiplexing gains, and resource utilization efficiency.  Fortunately, 
\emph{physical-layer security} \cite{Wyner1975Wire-tap}, which aims to achieve information-theoretic  secrecy by means of channel coding and exploiting inherent characteristics of wireless media, e.g., random noise and fluctuate channels, is expected to be a powerful supplement of the encryption security mechanisms. 
A considerable body of literature has proven that physical-layer security can gain remarkable
secrecy enhancement for various wireless networks \cite{Wang2016Physical_book}-\cite{Ghosh2017Secrecy}.
In view of this, it is meaningful to explore the potential of physical-layer security for cache-enabled wireless networks.}

\subsection{Previous Works}

Existing research involving the security issue for wireless caching networks has mainly concentrated on encryption and coding based on an information-theoretic framework built in \cite{Maddah-Ali2014Fundamental}. For example, the authors in \cite{Sengupta2015Fundamental} have proposed a coded caching scheme as per Shannon's one-time pad encryption to achieve secrecy. 
However, such a coded scheme requires a large quantity of random secret keys, and secure sharing of these keys will create substantial overhead. 
This work has later been extended to a device-to-device network in \cite{Awan2015Fundamental}, where a sophisticated key generation and encryption scheme has been designed.
The authors in \cite{Gerami2015Secure} have investigated the security-wise content placement based on the maximum distance separable codes. 
In \cite{Gabry2016On}, secure caching placement has been devised to prevent eavesdropper from intercepting a sufficient number of coded packets for successfully recovering video files. {In \cite{Balasubramanian2013Secure,Li2018Fundamental}, the fundamental limits on the secure rate for both symmetric and asymmetric multilevel diversity decoding systems have been explored with different-importance messages stored distributively in several sources.}

It should be stressed that, the  predominant security menace to a wireless caching network stems from content delivery rather than content placement, since the former is more susceptible to eavesdropping attacks. 
In response to this risk, it is urgent to seek effective wireless security mechanisms, particularly taking into account the intrinsic characteristics of wireless channels. This unfortunately has not yet been reported in the aforementioned endeavors.
In two recent works \cite{Xiang2017Cache,Xiang2018Secure}, physical-layer security in a multi-cell wireless caching network has been investigated. The authors have exploited cooperative multi-antenna transmissions to increase the secrecy rate against a single eavesdropper in \cite{Xiang2017Cache} and multiple untrusted cache helpers in \cite{Xiang2018Secure}. 
Nevertheless, an overly optimistic assumption has been made therein  that the instantaneous channel state information (CSI) of the eavesdropper can be estimated, which is difficult to realize in a real-world wiretap scenario because the eavesdropper usually listens passively and even remains silent to hide its existence. Moreover, the impacts of channel fading and the uncertainty of eavesdroppers' locations on security performance have not been assessed in \cite{Xiang2017Cache,Xiang2018Secure}.  
To the best of our knowledge, the benefit of physical-layer security for a cache-enabled wireless network has not yet been excavated in the presence of randomly distributed eavesdroppers. In particular, a comprehensive performance analysis and optimization framework on physical-layer security from a secrecy outage perspective is still lacking. These motivate our research work.
 
\subsection{Our Work and Contributions}
This paper examines physical-layer security for a cache-enabled macro/small-base-station (MBS/SBS) cellular network coexisting with randomly located eavesdroppers.
We establish a design framework combining both caching placement and wireless transmission, and provide a comprehensive analysis and optimization on security performance in terms of throughput and energy efficiency.
The main contributions of this paper are summarized as follows:

\begin{itemize}
\item We put forward a novel hybrid ``most popular content (MPC)'' and ``largest content diversity (LCD)'' caching placement policy to assign different-popularity files to the SBSs. 
We then employ distributed beamforming, frequency-domain orthogonal transmission, and best SBS relaying as transmission schemes for the situations when the requested file is stored at the MPC or LCD caching mode, or is not cached by the SBSs, respectively.

\item We assess the connection outage probability (COP) and  secrecy outage probability (SOP) of the content delivery for each transmission scheme with tractable expressions provided.
We also make an analytical comparison between these schemes and summarize their pros and cons in terms of the COP and SOP, respectively.

\item
We reveal a nontrivial trade-off between content diversity, throughput, and energy efficiency under the proposed hybrid caching policy. Subsequently, we optimize the overall secrecy throughput and secrecy energy efficiency by jointly determining the transmission rates and caching allocation. We derive explicit solutions on the optimal rate and caching parameters, and develop various useful properties regarding them to guide practical designs.
\end{itemize}

\subsection{Organization and Notations}
The remainder of this paper is organized as follows. 
In Section II, we describe the system model. 
In Sections III, we analyze the COP and SOP for the DBF, FOT, and BSR schemes, respectively. 
In Sections IV and V, we jointly design the optimal transmission rates and the optimal caching allocation for maximizing the overall secrecy throughput and the secrecy energy efficiency, respectively. 
In Section VI, we conclude our work.

\emph{Notations}: $|\cdot|$, $\lceil \cdot\rceil$, $(\cdot)^{\dagger}$, $\ln(\cdot)$, $\mathbb{P}\{\cdot\}$, $\mathbb{E}_v[\cdot]$ denote the absolute value, round up, conjugate, natural logarithm, probability, and the expectation taken over a random variable $v$, respectively. 


\section{System Model}
\begin{figure}[!t]
\centering
\includegraphics[width = 3.0in]{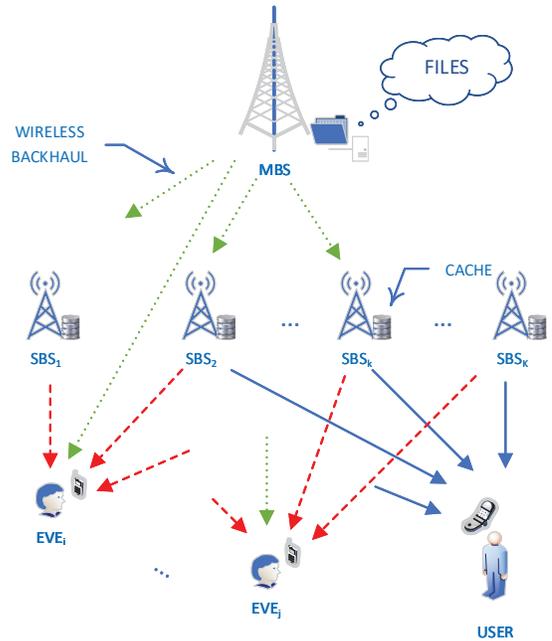}
\caption{Illustration of a cache-enabled heterogeneous cellular network. The delivery of  confidential content to a subscriber is potentially wiretapped by randomly located eavesdroppers. The wireless backhaul links from the MBS to the SBSs are insecure. }
\label{SYSTEM_MODEL}
\end{figure}
We examine a heterogeneous cellular network where an MBS and $K$ SBSs collaboratively convey secret content to a subscriber, as illustrated in Fig. \ref{SYSTEM_MODEL}. The ongoing content delivery is overheard by  potential eavesdroppers, e.g., non-paying subscribers. 
All the SBSs are connected to the MBS through wireless backhaul links. The SBSs each are equipped with a cache unit, with which they can pre-store popular content before serving local users, as a means of {shifting} the traffic burden from peak to off-peak hours. If a file requested by a subscriber is available in the SBSs, a \emph{cache hit} event is deemed to happen, and the SBSs can directly send the requested file to the subscriber; otherwise, a \emph{cache miss} event is said to have occurred, and the SBSs should first fetch the requested file from the MBS and then deliver it to the subscriber.\footnote{We assume that there is no direct transmission link from the MBS to the user due to a deep fading and a long distance.} The MBS, SBSs, subscribers, and eavesdroppers are all single-antenna devices, and only one file demand is admitted in each time slot.\footnote{{If multiple subscribers send file requests to an SBS simultaneously, in order to avoid the inter-user interference, these subscribers can be served using orthogonal multiple access methods, e.g., TDMA and FDMA.} }
Without loss of generality, we place a typical user (subscriber) at the origin of the polar coordinate. We denote the locations of the MBS, the $k$-th nearest SBS, and the $j$-th nearest eavesdropper as {$\{b: (r_{b},\theta_{b}) \}$}, $\{{s_k}:(r_{s_k},\theta_{s_k})\}$, and $\{e_j:(r_{e_j},\theta_{e_j})\}$, respectively, with $r$ and $\theta$ being the corresponding distance and direction. Since the eavesdroppers are randomly located over the network, we model their positions by a stationary Poisson point process (PPP) $\Phi_e$ on the two-dimensional plane with density $\lambda_e$, i.e., $e_j\in\Phi_e$ \cite{Zheng2015Multi}-\cite{Zheng2017Physical}. 
With a PPP model, the uncertainty of the eavesdroppers' locations can be characterized analytically using tools from the stochastic geometry theory, and key insights can be abstracted conveniently.

Wireless channels, including the main channels spanning from the SBSs to the subscriber and the wiretap channels from the MBS/SBSs to the eavesdroppers, are modeled by a frequency flat Rayleigh fading along with a standard distance-based path loss. 
Hence, the channel gain from a transmitter located at $x$ to a receiver at $y$ can be expressed as $h_{x,y}r_{x,y}^{-\alpha/2}$, where $h_{x,y}$ denotes the fading coefficient obeying the circularly symmetric complex Gaussian distribution with unit mean and zero variance, $r_{x,y}$ denotes the distance between $x$ and $y$, and $\alpha$ denotes the path-loss exponent. {We suppose that the SBSs can acquire the instantaneous CSI of their respective main channels through channel training, estimation, and feedback. We also adopt a generic hypothesis that the statistic CSI of the wiretap channels is also available at the base stations \cite{Zhou2011Throughput}-\cite{Liu2017Enhancing}.\footnote{{Typical examples include a regular user in the network, who has no authorization to access the content sent to the subscriber in a specific time slot and is thus treated as a potential eavesdropper. Although it is difficult for a base station to obtain the eavesdropper's instantaneous CSI, the base station is still capable to learn the eavesdropper' statistic CSI by collecting and analyzing an exceedingly large amount of information exchanging between the two parties during the other time slots.}} }

\subsection{Hybrid Caching Placement Policy}
The MBS possesses a library of $N$ equal-size files of different popularities. For convenience, we denote the $m$-th most popular file as $F_m$. 
{Assume that users make file demands independently as per the Zipf distribution \cite{Shanmugam2013FemtoCaching}, and then the request probability for file $F_m$ is given by
	\begin{equation}\label{zipf}
	p_{m} = \frac{m^{-\tau}}{\sum_{n=1}^{N}n^{-\tau}},
	\end{equation}
	where $\tau>0$ models the skewness of popularity distribution with a larger $\tau$ corresponding to a more concentrated popularity profile.}

We consider that the $K$ SBSs each have a caching capacity of $L$ files such that they can store up to $KL$ files. To fully utilize the limited cache resources, we propose a hybrid MPC and LCD caching placement policy to judiciously distribute files to the SBSs. 
To be specific, we first divide every file into $K$ equal-size partitions, and assign the first $M\le L$ most popular files in every SBS (i.e., MPC caching). We then use the remaining cache resources to disjointly store the $K$ partitions of the less popular files at the $K$ SBSs  (i.e., LCD caching), as a means to increase the content diversity (i.e., caching more files).
With such hybrid caching, at most $M+K(L-M)$ different files can be found in the cache units, and the files in the library can be classified into three categories: file $F_n$ with popularity order $1\le n\le M $ is cached at the MPC mode and is available in every SBS; file $F_n$ with $M < n \le M + K(L-M)$ is cached at the LCD mode, and the $K$ SBSs each hold one unique partition of $F_n$; file $F_n$ with $ n > M + K(L-M)$ is not stored by the SBSs and can only be fetched from the MBS. 
{We will later show an inherent trade-off between content diversity, reliability, and secrecy with our hybrid caching policy. An elaborative allocation between MPC and LCD caching, i.e., choosing a proper $M$, plays a critical role in balancing the above aspects and enhancing the overall security performance.}

\subsection{Cooperative Transmission Schemes}
When the $K$ SBSs receive a file request from the typical user, they should immediately deliver the requested file to it. 
Depending on the storage status of the requested file, the following three cooperative transmission schemes are employed for file delivery.

\subsubsection{Distributed Beamforming (DBF)}
If the requested file $F_n$ is cached at the MPC mode, i.e., $1\le n\le M $, the SBSs own the same copy of $F_n$. 
In order to improve transmission reliability, the SBSs jointly deliver the requested file via a distributed beamforming manner.\footnote{With DBF, each SBS only requires the local CSI of itself instead of the global CSI, which greatly lowers the system overhead.} 
Denote the weight coefficient at the $k$-th SBS as ${w_{s_k,o}} = h^{\dagger}_{s_k,o}/|h_{s_k,o}|$, and then the received signal-to-noise ratios (SNRs) at the typical user and at the $j$-th eavesdropper can be respectively expressed as
\begin{equation}\label{snr_o_db}
{\gamma}_{o} ={P_s}\left|\sum_{k=1}^K |h_{s_k,o}|r_{s_k,o}^{-\alpha/2}\right|^2,
\end{equation}
\begin{equation}\label{snr_e_db}
{\gamma}_{e_j} = {P_s}\left|\sum_{k=1}^K \frac{h^{\dagger}_{s_k,o}h_{s_k,e_j}}{|h_{s_k,o}|}r_{s_k,e_j}^{-\alpha/2}\right|^2,~~~\forall e_j\in\Phi_e,
\end{equation}
where $P_s$ denotes the SBS transmit power normalized by the receiver noise power. We consider identical noise spectral density at all the receivers.

\subsubsection{Frequency-Domain Orthogonal Transmission (FOT)}
If the requested file $F_n$ is cached at the LCD mode, i.e., $M < n \le M + K(L-M)$, the SBSs each have one different partition of $F_n$.
In order to avoid the co-channel interference, the SBSs simultaneously transmit their stored partitions in a frequency-domain orthogonal manner, i.e., disjointly using $1/K$ of the overall bandwidth. For the partition delivered from the $k$-th SBS, the received SNRs at the typical user and at the $j$-th eavesdropper can be respectively given by
\begin{equation}\label{snr_o_ot}
{\gamma}_{s_k,o} =  {KP_s}\left|h_{s_k,o}\right|^2r_{s_k,o}^{-\alpha},
\end{equation}
\begin{equation}\label{snr_e_ot}
{\gamma}_{s_k,e_j} =  {KP_s}\left|h_{s_k,e_j}\right|^2r_{s_k,e_j}^{-\alpha},~~~\forall e_j\in\Phi_e.
\end{equation}
Note that the factor $K$ exists due to the $1/K$ decrement of bandwidth available for each SBS.

\subsubsection{Best SBS Relaying (BSR)}
If the requested file $F_n$ is not cached by the SBSs, i.e., $n >M + K(L-M)$, it should be fetched from the MBS before being sent to the user. In order to balance  throughput and energy efficiency, a BSR scheme is proposed where the SBS having the highest main channel gain is chosen to forward the requested file from the MBS to the typical user. Then, the whole content delivery is divided into two hops, and there actually exists double information leakage to the eavesdroppers due to the insecure wireless backhauling. {To prevent the eavesdroppers from realizing coherent superposition of the two-hop signals, e.g., via maximal ratio combining, the selected SBS employs the decode-and-forward relaying protocol and re-encodes the message with independent codewords or even separate codebooks from those used by the MBS \cite{Zheng2015Outage}. By this means, any eavesdropper can only  demodulate the two-hop signals individually, and thus the wiretapping capability will be degraded remarkably.} Denote the index of the best SBS as $k^*$, i.e., $k^*=\arg_{1\le k\le K}\max |h_{s_k,o}|^2r_{s_k,o}^{-\alpha}$, and then the received SNRs at the typical user and at the $j$-th eavesdropper for the two hops can be respectively given by 
\begin{equation}\label{snr_o_ss}
{\gamma}_{s_{k^*},o} =  {P_s}\left|h_{s_{k^*},o}\right|^2r_{s_{k^*},o}^{-\alpha},
\end{equation}
\begin{equation}\label{snr_e_ss1}
{\gamma}_{b,e_j} =  {P_m}\left|h_{b,e_j}\right|^2r_{b,e_j}^{-\alpha},~~~\forall e_j\in\Phi_e,
\end{equation}
\begin{equation}\label{snr_e_ss2}
{\gamma}_{s_{k^*},e_j} =  {P_s}\left|h_{s_{k^*},e_j}\right|^2r_{s_{k^*},e_j}^{-\alpha},~~~\forall e_j\in\Phi_e,
\end{equation}
where $P_m$ denotes the MBS transmit power normalized by the receiver noise power. Note that wireless backhauling not only worsens transmission secrecy, but also increases the end-to-end latency as well as power consumption. All these negative impacts on the system performance will be taken into account in the subsequent analysis and optimization.

{In practice, the proposed caching policy and physical-layer transmission schemes can be implemented in two different timescales, namely, short-timescale and long-timescale, respectively. To be specific, in the short-timescale, content should be delivered on-line in each time slot based on the actual user request and the instantaneous CSI. In contrast, in the long-timescale, the cached content can be updated off-line every $T$ time slots based on the historical profiles of user preferences and the statistic CSI. Typically, we can choose $T\gg 1$, since user preferences vary on a much slower scale (e.g., in days) than user requests (e.g., in milliseconds) \cite{Xiang2018Secure}.}

\subsection{Performance Metrics}
{For the sake of secrecy, the well-known Wyner's wiretap code is employed to encode data before transmission \cite{Wyner1975Wire-tap}. 
	There are two rates in the wiretap code, namely, codeword rate $R_t$ and secrecy rate $R_s$, which are the rates of the transmitted codewords and the embedded secret messages, respectively. The rate redundancy $R_e = R_t - R_s$ reflects the cost of achieving secrecy against eavesdropping. If the main channel capacity falls below $R_t$, the intended receiver cannot recover the codeword correctly and this is regarded as connection outage. The probability that this event takes place is referred to as the COP, denoted as $\mathcal{O}_{co}$.
	If the capacity of the wiretap channel lies above $R_e$, perfect secrecy is compromised and a secrecy outage event is considered to have occurred. 
	The probability of this event happening is referred to as the SOP, denoted as $\mathcal{O}_{so}$ \cite{Zhou2011Throughput}-\cite{Zheng2017Physical}. 
	We consider non-colluding wiretapping where the eavesdroppers decode messages individually. Therefore, perfect secrecy can be ensured if confidential information is not leaked to the most deteriorate eavesdropper who has the highest wiretap channel capacity.}

In this paper, we focus on two core metrics, namely, \emph{secrecy throughput} and \emph{secrecy energy efficiency}, respectively. The two metrics measure the physical-layer security performance in terms of transmission capacity and energy efficiency from an outage point of view, respectively.

{\emph{{1)} Secrecy throughput (bits/s/Hz)}} \cite{Zhou2011Throughput}-\cite{Zheng2017Physical}, denoted as $\Psi$, is defined as the {average} successfully transmitted secret information bits per second per Hertz subject to an SOP constraint $\mathcal{O}_{so}\le \epsilon$, where $\epsilon\in[0,1]$ is a prescribed threshold SOP. 
The secrecy throughput for a specific transmission scheme, labeled as $i\in\{\rm D,F,B\}$,{\footnote{Throughout this paper, the letters ``$\rm D$'', ``$\rm F$'', and ``$\rm B$'' refer to the DBF, FOT, and BSR transmission schemes, respectively.}} can be expressed mathematically as the product of the secrecy rate and the complement of the COP, i.e., $\Psi^{i} = R^{i}_{s}(1-\mathcal{O}^{i}_{co})$.
Then, the overall secrecy throughput under the proposed caching policy can be given by
\begin{equation}\label{def_st}
\bar\Psi = \sum_{i\in\{ \rm D,F,B\}}p_t^{i} \Psi^{i},
\end{equation}
where $p_t^{i}$ denotes the probability of adopting scheme $i$, which will be detailed later for the design of caching allocation.

\emph{{2)} Secrecy energy efficiency (bits/Joule/Hz)} \cite{Zheng2017Physical}, denoted as $\Omega$, is defined as the {average} successfully transmitted classified information bits per Joule per Hertz subject to an SOP constraint. Formally, it can be expressed as the ratio of the overall secrecy throughput $\bar\Psi$ to the average consumed power $P_{\rm avg}$ for serving a user request, which is given below:
\begin{equation}\label{def_see}
\Omega =  \frac{\bar\Psi}{P_{\rm avg}}=\frac{p_t^{\rm D}\Psi^{\rm D}+p_t^{\rm F}\Psi^{\rm F}+p_t^{\rm B}\Psi^{\rm B}}{K P_s(p_t^{\rm D}+p_t^{\rm F})+p_t^{\rm B}(P_m+P_s)}.
\end{equation}

{We emphasize that the wiretap codeword rate $R_t$, secrecy rate $R_s$, rate redundancy $R_e$, and the allocation $M$ between MPC and LCD caching play significant roles in improving the secrecy throughput and secrecy energy efficiency. 
Specifically, $R_t$, $R_s$, and $R_e$ trigger a nontrivial trade-off between transmission reliability, secrecy, and throughput for each transmission scheme. Generally, high throughput requires a large $R_s$, whereas an overly large $R_s$ will inevitably increase the COP and in turn lower the throughput. Likewise, increasing $R_e$ can decrease the SOP but it will also enlarge $R_t$ thus leading to a larger COP and a lower throughput. 
Meanwhile, the allocation $M$ strikes a vital trade-off between content diversity, throughput, and energy efficiency. 
On the one hand, increasing $M$, i.e., a larger probability of adopting the DBF scheme, is profitable for throughput improvement. 
On the other hand, a too large $M$ will decrease the content diversity or increase the chance of backhauling file fetching thus introducing additional delivery delay and power consumption, which is unfortunately destructive for throughput and energy efficiency. 
The above opposite impacts on the system performance need to be weighed carefully.}

In this paper, we desire to optimize the secrecy throughput and secrecy energy efficiency by jointly determining the values of $R_s$, $R_e$, and $M$. 
To this end, we first analyze the COP and SOP for each transmission scheme, which will lay a solid foundation for the subsequent optimization.

\section{COP and SOP Analysis}
In this section, we derive the COP $\mathcal{O}_{co}$ and the SOP $\mathcal{O}_{so}$ for the three cooperative transmission schemes described in Sec. II-B, namely, DBF, FOT, and BSR schemes, respectively. 
For ease of notation, we will drop the superscripts $\{ \rm D,F,B\}$ when discussing the specific scheme. 

\subsection{DBF Scheme}

\subsubsection{COP}
In the DBF scheme, all the $K$ SBSs transmit the same file simultaneously, and a connection outage event occurs if $\log_2\left(1+\gamma_o\right)<R_t$, where $\gamma_o$ is the SNR of the typical user given in \eqref{snr_o_db} and $R_t$ is the wiretap codeword rate. Let $\beta_t=2^{R_t}-1$ denote the  threshold SNR for connection outage.
Then, the COP for the DBF scheme can be expressed as 
\begin{align}\label{pco_dbf}
\mathcal{O}_{co}&= \mathbb{P}\left\{\gamma_o< \beta_t\right\}=\mathbb{P}\left\{\left|\sum_{k=1}^K |h_{s_k,o}|r_{s_k,o}^{-\alpha/2}\right|^2< \frac{\beta_t}{P_s}\right\}.
\end{align}
Since the term $\left|\sum_{k=1}^K |h_{s_k,o}|r_{s_k,o}^{-\alpha/2}\right|^2$ in \eqref{pco_dbf} is the squared sum of independent and non-identically distributed Rayleigh random variables, it is intractable to derive a closed-form expression for the exact $\mathcal{O}_{co}$.
Instead, we provide an integral representation for $\mathcal{O}_{co}$ in the following theorem.
\begin{theorem}\label{theorem_pco_dbf}
The COP for the DBF scheme is given by
\begin{align}\label{pco_dbf_exact}
\mathcal{O}_{co} = \left(\frac{2\beta_t}{P_s}\right)^K
&\int\limits_{ \substack{
		y_1\ge 0, \cdots, y_K\ge 0\\
		\sum_{k=1}^K y_k<1}}e^{-\frac{\beta_t}{P_s}\sum_{k=1}^K{r_{s_k,o}^{\alpha}y_k^2}}\times\nonumber\\
&\prod_{k=1}^{K}\left({r_{s_k,o}^{\alpha}y_k}\right)dy_1,\cdots, dy_K.
\end{align} 
\end{theorem}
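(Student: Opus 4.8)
The plan is to compute the probability in \eqref{pco_dbf} by integrating the joint density of the Rayleigh-distributed fading magnitudes $|h_{s_k,o}|$ over the region where the weighted sum is small. First I would introduce the shorthand $X_k := |h_{s_k,o}|$, which are independent; since $h_{s_k,o}$ is circularly symmetric complex Gaussian with unit variance, each $X_k$ is Rayleigh distributed with density $f_{X_k}(x) = 2x\, e^{-x^2}$ for $x\ge 0$ (using the convention that $\mathbb{E}[|h_{s_k,o}|^2]=1$). Write $a_k := r_{s_k,o}^{-\alpha/2}>0$ and $c := \sqrt{\beta_t/P_s}$. Then the event in \eqref{pco_dbf} is exactly $\{\sum_{k=1}^K a_k X_k < c\}$, because all terms $a_k X_k$ are nonnegative so the squared sum being below $\beta_t/P_s$ is equivalent to the sum being below $c$.

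Next I would write the COP as the $K$-fold integral of the product density over the simplex-like region:
\begin{equation}\label{pco_dbf_step}
\mathcal{O}_{co} = \int\limits_{\substack{x_1\ge 0,\cdots,x_K\ge 0\\ \sum_{k=1}^K a_k x_k < c}} \prod_{k=1}^K \left(2 x_k\, e^{-x_k^2}\right) dx_1\cdots dx_K.
\end{equation}
Then I would perform the change of variables $y_k := a_k x_k / c = r_{s_k,o}^{-\alpha/2} x_k/c$, i.e. $x_k = (c/a_k) y_k = c\, r_{s_k,o}^{\alpha/2} y_k$, with $dx_k = c\, r_{s_k,o}^{\alpha/2}\, dy_k$. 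The constraint $\sum a_k x_k < c$ becomes $\sum_{k=1}^K y_k < 1$, and $x_k\ge 0$ becomes $y_k\ge 0$. Substituting: each factor $2x_k\,dx_k$ contributes $2 c^2 r_{s_k,o}^{\alpha} y_k\, dy_k$, and the exponent becomes $-x_k^2 = -c^2 r_{s_k,o}^{\alpha} y_k^2$. Collecting the $K$ copies of $2c^2 = 2\beta_t/P_s$ out front gives the prefactor $(2\beta_t/P_s)^K$, and the remaining integrand is $e^{-(\beta_t/P_s)\sum_k r_{s_k,o}^{\alpha} y_k^2}\prod_{k=1}^K (r_{s_k,o}^{\alpha} y_k)$ over the region $\{y_k\ge 0,\ \sum_k y_k<1\}$, which is precisely \eqref{pco_dbf_exact}.

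The only genuinely nontrivial step is the first equivalence: arguing that $\left|\sum_k |h_{s_k,o}| r_{s_k,o}^{-\alpha/2}\right|^2 < \beta_t/P_s$ is the same event as $\sum_k |h_{s_k,o}| r_{s_k,o}^{-\alpha/2} < c$ — this uses crucially that the distributed-beamforming weights $w_{s_k,o}=h^\dagger_{s_k,o}/|h_{s_k,o}|$ co-phase the contributions so that the inner sum is real and nonnegative, hence squaring is monotone on the relevant range. Everything after that is a routine Jacobian computation; I would just need to track the constant $c^2=\beta_t/P_s$ carefully so that exactly $K$ factors of $2c^2$ are pulled outside to form $(2\beta_t/P_s)^K$. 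No closed form is expected (and the theorem does not claim one) because the integration region couples the $y_k$ through $\sum_k y_k<1$ while the integrand is not permutation-symmetric in general (the $r_{s_k,o}$ differ), so the integral is left in this explicit but irreducible form.
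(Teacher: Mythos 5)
Your proposal is correct and follows essentially the same route as the paper's proof: reduce the event to $\sum_k |h_{s_k,o}| r_{s_k,o}^{-\alpha/2} < \sqrt{\beta_t/P_s}$ (valid since the co-phased terms are nonnegative), integrate the product of Rayleigh densities over that region, and rescale so the constraint becomes $\sum_k y_k < 1$; your single change of variables is just the composition of the paper's two substitutions and yields the identical Jacobian bookkeeping. No gaps.
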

\begin{proof}
Please refer to Appendix \ref{appendix_theorem_pc_db}.
\end{proof}

To facilitate the analysis, we give a closed-form expression for $\mathcal{O}_{co}$ for the high SNR regime.
\begin{corollary}\label{corollary_pco_dbf}
	At the high SNR regime where $P_s\rightarrow\infty$, the COP $\mathcal{O}_{co}$ in \eqref{pco_dbf_exact} approaches
	\begin{align}\label{pco_dbf_asymptotic}
{\mathcal{O}}_{co}= \frac{2^K}{\Gamma(2K+1)}\left(\frac{\beta_t}{P_s}\right)^K\prod_{k=1}^{K}{r_{s_k,o}^{\alpha}}.
	\end{align}
\end{corollary}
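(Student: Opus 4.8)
The plan is to pass to the limit $P_s\to\infty$ directly inside the integral representation \eqref{pco_dbf_exact} and then evaluate the resulting constant. First I would note that on the bounded simplex $\{y_1\ge0,\cdots,y_K\ge0,\ \sum_{k=1}^K y_k<1\}$ the exponential factor satisfies $0<e^{-\frac{\beta_t}{P_s}\sum_{k=1}^K r_{s_k,o}^{\alpha}y_k^2}\le1$ and converges pointwise to $1$ as $P_s\to\infty$, while the remaining factor $\prod_{k=1}^K\left(r_{s_k,o}^{\alpha}y_k\right)$ does not depend on $P_s$ and is integrable over the (bounded) simplex. Hence, by dominated convergence,
\begin{equation}\label{dbf_asym_limit}
\left(\frac{P_s}{2\beta_t}\right)^{\!K}\mathcal{O}_{co}\ \longrightarrow\ \prod_{k=1}^{K}r_{s_k,o}^{\alpha}\int\limits_{\substack{y_1\ge0,\cdots,y_K\ge0\\ \sum_{k=1}^K y_k<1}}\prod_{k=1}^{K}y_k\,dy_1\cdots dy_K .
\end{equation}

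Next I would evaluate the Dirichlet-type integral on the right-hand side of \eqref{dbf_asym_limit}. Using the standard identity $\int_{\{y_k\ge0,\,\sum_k y_k\le1\}}\prod_{k=1}^{K}y_k^{a_k-1}\,dy=\frac{\prod_{k=1}^{K}\Gamma(a_k)}{\Gamma\!\left(1+\sum_{k=1}^{K}a_k\right)}$ with $a_k=2$ for every $k$, the numerator becomes $\prod_{k=1}^{K}\Gamma(2)=1$ and the denominator becomes $\Gamma(2K+1)$, so the integral equals $1/\Gamma(2K+1)$. Equivalently, one can obtain this by induction on $K$: integrating out $y_K$ over $[0,\,1-\sum_{k<K}y_k]$ and invoking the Beta integral reduces the $K$-dimensional case to the $(K-1)$-dimensional one, with $\int_0^1 y\,dy=1/2=1/\Gamma(3)$ as the base case. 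Substituting this value back into \eqref{dbf_asym_limit} and solving for $\mathcal{O}_{co}$ yields
\begin{equation}\label{dbf_asym_final}
\mathcal{O}_{co}\ \sim\ \frac{2^{K}}{\Gamma(2K+1)}\left(\frac{\beta_t}{P_s}\right)^{\!K}\prod_{k=1}^{K}r_{s_k,o}^{\alpha},\qquad P_s\to\infty,
\end{equation}
which is exactly \eqref{pco_dbf_asymptotic}.

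The only step that needs any attention is the interchange of limit and integral, and it is entirely routine here: the domain of integration is the fixed, bounded unit simplex, and the integrand is uniformly dominated by the $P_s$-independent, integrable function $\prod_{k=1}^{K}\left(r_{s_k,o}^{\alpha}y_k\right)$, so no tail estimate is required. Everything else reduces to the bookkeeping of the Dirichlet integral, which I expect to be the most computation-heavy part, though it is a classical evaluation. A minor sanity check (e.g.\ $K=1$, where \eqref{pco_dbf_exact} gives $\mathcal{O}_{co}=(2\beta_t/P_s)\int_0^1 e^{-\frac{\beta_t}{P_s}r_{s_1,o}^{\alpha}y^2}r_{s_1,o}^{\alpha}y\,dy=1-e^{-\frac{\beta_t}{P_s}r_{s_1,o}^{\alpha}}\approx \frac{\beta_t}{P_s}r_{s_1,o}^{\alpha}$, matching \eqref{dbf_asym_final} with $\Gamma(3)=2$) can be used to corroborate the constant.
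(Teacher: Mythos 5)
Your proof is correct and follows essentially the same route as the paper: pass to the limit inside the integral in \eqref{pco_dbf_exact} and evaluate the resulting Dirichlet-type integral over the simplex (the paper cites Gradshteyn--Ryzhik Eqn.\ (4.634) for exactly this identity with $a_k=2$). Your explicit dominated-convergence justification and the $K=1$ sanity check are welcome additions but do not change the argument.
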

\begin{proof}
	The result follows easily by invoking $\lim_{P_s\rightarrow\infty} e^{-\left({\beta_t}/{P_s}\right)\sum_{k=1}^K{r_{s_k,o}^{\alpha}y_k^2}}\rightarrow 1$ and \cite[Eqn. (4.634)]{Gradshteyn2007Table} with \eqref{pco_dbf_exact}.
\end{proof}

Corollary \ref{corollary_pco_dbf} shows that the COP decreases with the SBS power $P_s$ exponentially and increases with the codeword rate $R_t$, the SBS-user distance $r_{s_k,o}$, and the path-loss exponent $\alpha$. The asymptotic $\mathcal{O}_{co}$ in \eqref{pco_dbf_asymptotic} further reveals that, due to the joint transmission, the DBF scheme can achieve full diversity, i.e., a $K$-order diversity gain, where the diversity gain is defined as the rate of decay to zero of the COP for the high SNR regime, i.e., $- \lim_{P_s\rightarrow\infty}{\ln\mathcal{O}_{co}}/{\ln P_s}$ \cite{Zheng2003Diversity}.

\subsubsection{SOP}
A secrecy outage event happens if $\log_2(1+\max_{e_j\in\Phi_e}{\gamma_{e_j}})>R_e$, where $\gamma_{e_j}$ is the SNR of the $j$-th eavesdropper given in \eqref{snr_e_db} and $R_e$ is the rate redundancy. 
Let $\beta_e=2^{R_e}-1$ denote the threshold SNR for secrecy outage. Then, the SOP can be calculated as follows:
\begin{align}\label{pso_db}
&\mathcal{O}_{so}= \mathbb{P}\left\{\max_{e_j\in\Phi_e}\gamma_{e_j}> \beta_e\right\}\nonumber\\
& =1-\mathbb{E}_{\Phi_e}\left[\prod_{e_j\in\Phi_e}\mathbb{P}\left\{P_s\left|\sum_{k=1}^K \frac{h^{\dagger}_{s_k,o}h_{s_k,e_j}}{|h_{s_k,o}|}r_{s_k,e_j}^{-\alpha/2}\right|^2\le\beta_e\right\}\right]\nonumber\\
&\stackrel{\mathrm{(a)}}= 1-\mathbb{E}_{\Phi_e}\left[\prod_{e_j\in\Phi_e}\left(1-e^{-\frac{\beta_e/P_s}{\sum_{k=1}^Kr_{s_k,e_j}^{-\alpha} }}\right)\right]\nonumber\\
&\stackrel{\mathrm{(b)}}
=1-\exp\left(-\lambda_e\int_0^\infty\int_0^{2\pi}e^{-\frac{\beta_e/P_s}{\sum_{k=1}^Kr_{s_k,e}^{-\alpha} }}rdrd\theta\right),
\end{align}
where step $\rm (a)$ holds as the term $\left|\sum_{k=1}^K \frac{h^{\dagger}_{s_k,o}h_{s_k,e_j}}{|h_{s_k,o}|}r_{s_k,e_j}^{-\alpha/2}\right|^2$ obeys the exponential distribution with mean $\sum_{k=1}^Kr_{s_k,e_j}^{-\alpha}$, and step $\rm (b)$ follows from the probability generating functional (PGFL) over a PPP {which implements $\mathbb{E}_{\Phi}\left[\prod_{x\in\Phi}f(x)\right]=\exp(-\lambda\int_{\mathbb{R}^2}[1-f(x)]dx)$ for a PPP $\Phi$ of density $\lambda$ and a real valued function $f(x): \mathbb{R}^2\rightarrow [0,1]$} \cite[Sec. 4.3.6]{Chiu2013Stochastic}. Note that, we have $r_{s_k,e}=\sqrt{r_{s_k}^2+r^2-2r_{s_k}r\cos(\theta_{s_k}-\theta)}$ in step $\rm (b)$. Although $\mathcal{O}_{so}$ in \eqref{pso_db} is not closed-form, the integral is fairly computational-convenient.
We can easily confirm that $\mathcal{O}_{so}$ increases with the eavesdropper density $\lambda_e$ and the SBS  power $P_s$, and decreases with the rate redundancy $R_e$.

\subsection{FOT Scheme}
\subsubsection{COP}
In the FOT scheme, the requested file is divided into $K$ disjoint partitions, and connection outage takes place if not all the partitions are decoded correctly by the typical user. In other words, the COP can be interpreted as the probability that at least one $k\in\{1,\cdots,K\}$ satisfies $\log_2\left(1+\gamma_{s_k,o}\right)<R_t$, where $\gamma_{s_k,o}$ is the SNR of the $k$-th main channel given in \eqref{snr_o_ot}. Hence, a closed-form expression for the COP can be provided as below:
\begin{align}\label{pco_fot}
&\mathcal{O}_{co}
=1 - \mathbb{P}\left\{\bigcap^K_{k=1}\gamma_{s_k,o}\ge \beta_{t}\right\}\nonumber\\
&=1-\prod_{k=1}^K\mathbb{P}\left\{\frac{\left| h_{s_k,o}\right|^2}{r_{s_k,o}^{\alpha}}\ge \frac{\beta_t}{KP_s}\right\}=1- e^{-\frac{\beta_{t}}{KP_s }\sum_{k=1}^Kr_{s_k,o}^{\alpha}}.
\end{align}

At the high SNR regime with $P_s\rightarrow\infty$, $\mathcal{O}_{co}$ approaches $\frac{\beta_{t}}{KP_s }\sum_{k=1}^Kr_{s_k,o}^{\alpha}$, which indicates that the FOT scheme can only achieve a $1$-order diversity gain. The following proposition states that, due to the diversity loss, the FOT scheme is inferior to the DBF scheme in terms of reliability.
{ \begin{proposition}
The FOT scheme produces a larger COP than that of the DBF scheme. 
\end{proposition}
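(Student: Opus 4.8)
The plan is to avoid wrestling with the $K$-fold integral \eqref{pco_dbf_exact} and the closed form \eqref{pco_fot} directly, and instead to compare the two connection‑outage \emph{events} pointwise on the underlying sample space. First I would introduce the common nonnegative random variables $X_k \triangleq |h_{s_k,o}|\,r_{s_k,o}^{-\alpha/2}$, $k=1,\dots,K$, and the common constant $c \triangleq \sqrt{\beta_t/P_s}$. Since all quantities involved are nonnegative, taking a square root inside \eqref{pco_dbf} gives $\mathcal{O}_{co}^{\rm D} = \mathbb{P}\{\sum_{k=1}^K X_k < c\}$, whereas rewriting the product in \eqref{pco_fot} as a union over the $K$ links yields $\mathcal{O}_{co}^{\rm F} = \mathbb{P}\{\bigcup_{k=1}^K \{X_k < c/\sqrt{K}\}\} = \mathbb{P}\{\min_{1\le k\le K} X_k < c/\sqrt{K}\}$. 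Both expressions are now phrased in terms of the same $X_k$'s and the same threshold $c$, so the two schemes are being compared on an identical network realization.

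The key step is the elementary pointwise bound $\min_{1\le k\le K} X_k \le \frac{1}{K}\sum_{k=1}^K X_k$, which holds for every realization. Consequently, on the event $\{\sum_{k=1}^K X_k < c\}$ one has $\min_k X_k < c/K$, and since $K\ge 1$ forces $c/K \le c/\sqrt{K}$, the DBF connection‑outage event is contained in the FOT one: $\{\sum_{k=1}^K X_k < c\}\subseteq \{\min_{1\le k\le K} X_k < c/\sqrt{K}\}$. Taking probabilities immediately gives $\mathcal{O}_{co}^{\rm D}\le \mathcal{O}_{co}^{\rm F}$.

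To upgrade this to a strict inequality I would observe that the set inclusion above is strict and the ``extra'' portion has positive probability: for example the event $\{X_1 < c/\sqrt{K}\}\cap\bigcap_{k\ge 2}\{X_k > c\}$ lies in the FOT outage event but not in the DBF one, and it has positive probability because the $X_k$ are independent with absolutely continuous distributions (Rayleigh magnitudes scaled by fixed path losses). Hence $\mathcal{O}_{co}^{\rm F} > \mathcal{O}_{co}^{\rm D}$, which is the claimed statement.

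I do not anticipate a genuine obstacle here; the only thing that needs care is the reformulation of both COPs in terms of $\sum_k X_k$ versus $\min_k X_k$ and the recognition that the comparison is a pure set inclusion — this is precisely what lets one sidestep the awkward integral \eqref{pco_dbf_exact}. The laborious alternative of bounding that integral term by term against $\frac{\beta_t}{KP_s}\sum_k r_{s_k,o}^{\alpha}$ would also work but is far less transparent, and the pointwise argument additionally exposes the intuition: distributed beamforming coherently combines the $K$ SBS–user links, while frequency‑orthogonal transmission requires each individual link to clear the threshold on its own, so the latter can only be worse in terms of reliability.
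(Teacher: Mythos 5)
Your proof is correct and follows essentially the same route as the paper's: the set inclusion $\{\sum_{k}X_k<c\}\subseteq\{\min_{k}X_k<c/\sqrt{K}\}$ that you obtain from the pointwise bound $\min_k X_k\le\frac{1}{K}\sum_k X_k$ is precisely the contrapositive of the paper's chain, which compares $\mathbb{P}\{\min_k K|h_{s_k,o}|^2r_{s_k,o}^{-\alpha}\ge\beta_t/P_s\}$ with $\mathbb{P}\bigl\{\bigl|\sum_k|h_{s_k,o}|r_{s_k,o}^{-\alpha/2}\bigr|^2\ge\beta_t/P_s\bigr\}$. The only additional content you supply is the explicit positive-probability witness for strictness, which the paper leaves implicit and which, as your construction shows, requires $K\ge 2$ (for $K=1$ the two schemes coincide and the COPs are equal).
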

\begin{proof}
The second equality in \eqref{pco_fot} yields $\mathcal{O}_{co}
	>1-\mathbb{P}\left\{\min\limits_{k=1,\cdots,K}{K\left| h_{s_k,o}\right|^2}{r_{s_k,o}^{-\alpha}}\ge {\beta_t}/{P_s}\right\}>1-\mathbb{P}\left\{\left|\sum_{k=1}^K {|h_{s_k,o}|}{r_{s_k,o}^{-\alpha/2}}\right|^2\ge {\beta_t}/{P_s}\right\}$, with the last term equal to the COP given in \eqref{pco_dbf}. 
\end{proof}} 

\subsubsection{SOP}
Perfect secrecy is violated if an arbitrary partition of the requested file is intercepted by the eavesdroppers.
Then, the SOP can be defined as the complement of the probability that $\log_2\left(1+\gamma_{s_k,e_j}\right)\le R_e$ holds for all $k\in\{1,\cdots,K\}$, where $\gamma_{s_k,e_j}$ is the SNR of the wiretap channel from the $k$-th SBS to the $j$-th eavesdropper given in \eqref{snr_e_ot}. 
Hence, we have
\begin{align}\label{pso_ot}
&\mathcal{O}_{so}
=1-\mathbb{E}_{\Phi_e}\left[\prod_{e_j\in\Phi_e}\mathbb{P}\left\{\bigcap^K_{k=1}\gamma_{s_k,e_j}\le \beta_{e}\right\}\right]\nonumber\\
&=1-\mathbb{E}_{\Phi_e}\left[\prod_{e_j\in\Phi_e}\prod^K_{k=1}\mathbb{P}\left\{\frac{\left|h_{s_k,e_j}\right|^2}{r_{s_k,e_j}^{\alpha}}\le \frac{\beta_{e}}{KP_s} \right\}\right]\nonumber\\
&=1-\mathbb{E}_{\Phi_e}\left[\prod_{e_j\in\Phi_e}\prod^K_{k=1}\left(1-e^{-\frac{\beta_{e}r_{s_k,e_j}^{\alpha}}{KP_s} }\right)\right]\nonumber\\
&=1-\exp\left(\!-\lambda_e\!\int_0^\infty\!\!\int_0^{2\pi}\!\left[1-\!\prod^K_{k=1}\left(1-e^{-\frac{\beta_{e}r_{s_k,e}^{\alpha}}{KP_s}}\right)\right]rdrd\theta\right).
\end{align}
{ \begin{proposition}\label{proposition_sop_fot}
The FOT scheme provides a larger 	SOP than that of the DBF scheme.
\end{proposition}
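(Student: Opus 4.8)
The plan is to reduce the comparison of the two SOPs to a pointwise comparison of the integrands appearing in the exponents of \eqref{pso_db} and \eqref{pso_ot}. Both SOPs share the common form $1-e^{-\lambda_e\mathcal{I}}$ with $\mathcal{I}=\int_0^\infty\int_0^{2\pi}[1-q(e)]\,r\,dr\,d\theta$, where $q(e)$ is the conditional probability (given all node positions) that an eavesdropper located at $e$ does \emph{not} trigger a secrecy outage. Since $\lambda_e>0$ and $x\mapsto 1-e^{-x}$ is strictly increasing, it suffices to show that $q_{\rm F}(e)\le q_{\rm D}(e)$ for every $e$, with strict inequality on a set of positive Lebesgue measure, where from the last lines of \eqref{pso_db} and \eqref{pso_ot}
\[
q_{\rm D}(e)=1-e^{-\frac{\beta_e}{P_s\sum_{k=1}^Kr_{s_k,e}^{-\alpha}}},\qquad q_{\rm F}(e)=\prod_{k=1}^K\Big(1-e^{-\frac{\beta_er_{s_k,e}^\alpha}{KP_s}}\Big).
\]

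The pointwise inequality follows from two elementary steps. First, every factor of $q_{\rm F}(e)$ lies in $[0,1]$, so the product is no larger than its smallest factor; because $1-e^{-x}$ is increasing, that smallest factor equals $1-e^{-\beta_e\min_k r_{s_k,e}^\alpha/(KP_s)}$, and for $K\ge 2$ this bound is strict whenever all the $r_{s_k,e}$ are positive, i.e., off the null set $\{s_1,\dots,s_K\}$. Second, $\min_k r_{s_k,e}^\alpha\le K/\sum_{k=1}^Kr_{s_k,e}^{-\alpha}$, since $\big(\min_j r_{s_j,e}^\alpha\big)\sum_{k}r_{s_k,e}^{-\alpha}=\sum_k(\min_j r_{s_j,e}/r_{s_k,e})^\alpha\le K$. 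Chaining these two steps with the monotonicity of $1-e^{-x}$ gives $q_{\rm F}(e)\le 1-e^{-\beta_e\min_k r_{s_k,e}^\alpha/(KP_s)}\le q_{\rm D}(e)$, with the first inequality strict for a.e. $e$ when $K\ge 2$.

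It then follows that $1-q_{\rm F}(e)>1-q_{\rm D}(e)$ for a.e. $e$; both integrands decay super-polynomially in $r$ (like $e^{-c r^\alpha}$), so the integrals $\mathcal{I}_{\rm F}$ and $\mathcal{I}_{\rm D}$ are finite and $\mathcal{I}_{\rm F}>\mathcal{I}_{\rm D}>0$. Applying the strictly increasing map $x\mapsto 1-e^{-\lambda_e x}$ then yields $\mathcal{O}_{so}^{\rm F}>\mathcal{O}_{so}^{\rm D}$, as claimed.

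The argument is essentially routine; the only point needing care is the strictness bookkeeping, since the inequality chain degenerates to equality precisely when $K=1$ or when $e$ coincides with an SBS, so the statement is understood for the relevant regime $K\ge 2$. An alternative and perhaps more transparent viewpoint is to note from \eqref{snr_e_db} and \eqref{snr_e_ot} that the DBF eavesdropper SNR is exponential with mean equal to the arithmetic mean of the $K$ per-SBS FOT eavesdropper SNR means $KP_sr_{s_k,e_j}^{-\alpha}$; the comparison then reduces to ``one exponential crossing a threshold versus at least one of $K$ independent exponentials crossing it,'' but the distance-based computation above is the most direct route to the strict inequality.
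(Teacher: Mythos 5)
Your proof is correct and follows essentially the same route as the paper's: you bound the per-eavesdropper non-outage product by its smallest factor and then compare $\min_k r_{s_k,e}^{\alpha}$ (equivalently $1/\max_k r_{s_k,e}^{-\alpha}$) against $K/\sum_{k}r_{s_k,e}^{-\alpha}$, which is exactly the chain $\omega_o>e^{-\beta_e/(KP_s\max_k r_{s_k,e}^{-\alpha})}>\omega_d$ used in the paper's argument. The added bookkeeping on strictness, integrability, and the monotonicity of the PGFL map is a welcome tightening of the same idea rather than a different approach.
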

\begin{proof}
Let $\omega_d = e^{-{\beta_e}/\left({P_s}{\sum_{k=1}^Kr_{s_k,e}^{-\alpha} }\right)}$ in \eqref{pso_db} and $\omega_o = 1-\prod^K_{k=1}\left(1-e^{-{\beta_e}/\left({P_s}Kr_{s_k,e}^{-\alpha}\right) }\right)$ in \eqref{pso_ot}. Now that the SOPs for the DBF and FOT schemes share similar forms, the proof can be completed by proving that $\omega_o>1-\min\limits_{k=1,\cdots,K}\left(1-e^{-{\beta_e}/\left({P_s}Kr_{s_k,e}^{-\alpha}\right)  }\right)=e^{-{\beta_e}/\left({KP_s}\max\limits_{k=1,\cdots,K}r_{s_k,e}^{-\alpha}\right) }>\omega_d$.
\end{proof}}

Proposition \ref{proposition_sop_fot} illustrates that although the DBF scheme superposes $K$ identical signals at the eavesdropper, it still can provide a higher level of secrecy than that of the FOT scheme. This is mainly because that the FOT scheme triggers $K$ times information leakage to the eavesdropper.

\subsection{BSR Scheme}\label{bsr_scheme}
\subsubsection{COP}
In the BSR scheme, the SBS having the highest main channel capacity is selected to serve the typical user, and connection outage occurs if $\log_2(1+{\gamma}_{s_{k^*},o})<R_t$, where ${\gamma}_{s_{k^*},o}$ is the maximal SNR of the $K$ main channels given in \eqref{snr_o_ss} with index $k^*=\arg_{1\le k\le K}\max |h_{s_k,o}|^2r_{s_k,o}^{-\alpha}$. 
A closed-form expression for the COP can be given below:
\begin{align}\label{pco_bsr}
\mathcal{O}_{co} &=\mathbb{P}\left\{\max_{1\le k\le K} \frac{|h_{s_k,o}|^2}{r_{s_k,o}^{\alpha}} <\frac{\beta_t}{P_s}\right\}
=\prod_{k=1}^{K}\mathbb{P}\left\{ \frac{|h_{s_k,o}|^2}{r_{s_k,o}^{\alpha}}<\frac{\beta_t}{P_s}\right\} \nonumber\\&= \prod_{k=1}^{K}\left(1-e^{-\frac{\beta_{t}r_{s_k,o}^{\alpha}}{P_s}}\right).
\end{align}
	\begin{proposition}\label{proposition_pco_bsr1}
	The BSR scheme can achieve a $K$-order diversity gain. 
\end{proposition}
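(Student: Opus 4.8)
The plan is to work directly from the closed-form expression \eqref{pco_bsr} and evaluate the diversity gain $-\lim_{P_s\to\infty}\ln\mathcal{O}_{co}/\ln P_s$ by an elementary asymptotic expansion. First I would observe that for each $k$, as $P_s\to\infty$ the argument $\beta_t r_{s_k,o}^{\alpha}/P_s\to 0$, so the first-order Taylor expansion $1-e^{-x}=x+o(x)$ gives $1-e^{-\beta_t r_{s_k,o}^{\alpha}/P_s}=\frac{\beta_t r_{s_k,o}^{\alpha}}{P_s}\bigl(1+o(1)\bigr)$. Multiplying the $K$ independent factors in \eqref{pco_bsr} then yields
\begin{equation}\label{pco_bsr_asymptotic}
\mathcal{O}_{co}=\left(\frac{\beta_t}{P_s}\right)^K\prod_{k=1}^{K}r_{s_k,o}^{\alpha}\,\bigl(1+o(1)\bigr),\quad P_s\to\infty.
\end{equation}

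Next I would take logarithms of \eqref{pco_bsr_asymptotic}, obtaining $\ln\mathcal{O}_{co}=-K\ln P_s+K\ln\beta_t+\alpha\sum_{k=1}^{K}\ln r_{s_k,o}+\ln(1+o(1))$. Dividing by $\ln P_s$ and letting $P_s\to\infty$, every term except $-K\ln P_s/\ln P_s$ vanishes, so $-\lim_{P_s\to\infty}\ln\mathcal{O}_{co}/\ln P_s=K$, which is exactly the claimed $K$-order diversity gain. This mirrors the argument already used for the DBF scheme via Corollary \ref{corollary_pco_dbf} and contrasts with the $1$-order diversity of the FOT scheme noted after \eqref{pco_fot}.

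There is essentially no hard step here; the only point requiring a word of care is the passage from the termwise expansion $1-e^{-x}=x+o(x)$ to the product expansion \eqref{pco_bsr_asymptotic}, i.e. checking that the $K$ error terms combine into a single $1+o(1)$ factor (they do, since a finite product of factors of the form $1+o(1)$ is again $1+o(1)$, and the distances $r_{s_k,o}$ are fixed constants independent of $P_s$). Once that is in place, taking the logarithm and dividing by $\ln P_s$ is immediate, and the proposition follows.
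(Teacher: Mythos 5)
Your proof is correct and follows essentially the same route as the paper's: both apply the first-order expansion $1-e^{-x}\approx x$ to each factor of the product in \eqref{pco_bsr} to obtain $\mathcal{O}_{co}\approx\left({\beta_t}/{P_s}\right)^K\prod_{k=1}^{K}r_{s_k,o}^{\alpha}$ as $P_s\to\infty$, from which the $K$-order diversity gain is read off. Your version merely adds the explicit bookkeeping of the $1+o(1)$ error factors and the final limit computation, which the paper leaves implicit.
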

\begin{proof}
	Substituting the approximation $\lim_{P_s\rightarrow\infty}1-e^{-{\beta_{t}r_{s_k,o}^{\alpha}}/{P_s}}\approx {\beta_{t}r_{s_k,o}^{\alpha}}/{P_s}$ into \eqref{pco_bsr} yields $ \mathcal{O}_{co}=\left({\beta_t}/{P_s}\right)^K\prod_{k=1}^{K}{r_{s_k,o}^{\alpha}}$, which reveals a $K$-order diversity. 
\end{proof}
\begin{proposition}\label{proposition_pco_bsr2}
	 The BSR scheme gives a COP larger than that of the DBF scheme but less than that of the FOT scheme. 
\end{proposition}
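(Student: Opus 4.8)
The plan is to split the claim into its two halves, $\mathcal{O}_{co}^{\rm D}<\mathcal{O}_{co}^{\rm B}$ and $\mathcal{O}_{co}^{\rm B}<\mathcal{O}_{co}^{\rm F}$, and prove each separately; throughout I assume $K\ge 2$, as otherwise the three schemes coincide. For the first inequality I would reuse the event-inclusion idea already used to show that the FOT COP exceeds the DBF COP. The DBF connection-outage event in \eqref{pco_dbf} is $\big(\sum_{k=1}^{K}|h_{s_k,o}|r_{s_k,o}^{-\alpha/2}\big)^2<\beta_t/P_s$; since every summand is nonnegative, this forces $|h_{s_k,o}|^2 r_{s_k,o}^{-\alpha}<\beta_t/P_s$ for \emph{every} $k$, which is precisely the BSR connection-outage event $\max_{1\le k\le K}|h_{s_k,o}|^2 r_{s_k,o}^{-\alpha}<\beta_t/P_s$ of \eqref{pco_bsr}. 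Hence the DBF outage event is contained in the BSR outage event, giving $\mathcal{O}_{co}^{\rm D}\le\mathcal{O}_{co}^{\rm B}$; the inequality is strict because the reverse implication fails on a set of positive probability (e.g.\ each $|h_{s_k,o}|^2 r_{s_k,o}^{-\alpha}$ slightly below $\beta_t/P_s$, for which the squared sum is close to $K^2\beta_t/P_s$).

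The second inequality is the part that actually needs an argument, because the BSR and FOT outage events are \emph{not} nested, so set inclusion is unavailable and one must work with the closed forms \eqref{pco_bsr} and \eqref{pco_fot} directly. I would set $d_k:=\beta_t r_{s_k,o}^{\alpha}/P_s>0$, so that $\mathcal{O}_{co}^{\rm B}=\prod_{k=1}^{K}(1-e^{-d_k})$ and $\mathcal{O}_{co}^{\rm F}=1-e^{-\frac1K\sum_{k=1}^{K}d_k}$. Since each factor $1-e^{-d_k}$ lies in $(0,1)$ and there are $K\ge 2$ of them, the product is strictly below its smallest factor, i.e.\ $\mathcal{O}_{co}^{\rm B}<\min_{1\le k\le K}(1-e^{-d_k})=1-e^{-\min_k d_k}$, using that $x\mapsto 1-e^{-x}$ is increasing. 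Then, since the minimum never exceeds the arithmetic mean, $\min_k d_k\le\frac1K\sum_{k=1}^{K}d_k$, and monotonicity of $1-e^{-x}$ again gives $1-e^{-\min_k d_k}\le 1-e^{-\frac1K\sum_k d_k}=\mathcal{O}_{co}^{\rm F}$. Chaining the two bounds yields $\mathcal{O}_{co}^{\rm B}<\mathcal{O}_{co}^{\rm F}$. (Equivalently, one could bound $\mathcal{O}_{co}^{\rm B}$ by AM--GM on the $K$ factors and then apply Jensen's inequality to $e^{-x}$, but the min/mean route above is the most direct.)

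The only thing to watch is strictness, and it is benign: the product-versus-smallest-factor step is already strict for every $K\ge 2$, so it does not matter that the min-versus-mean step degenerates to equality when all $d_k$ coincide; also $\mathcal{O}_{co}^{\rm F}\in(0,1)$ automatically whenever $R_t>0$ and $P_s<\infty$. I expect the main (and really only) obstacle to be recognizing that, unlike the DBF/BSR and DBF/FOT comparisons, the BSR/FOT comparison cannot be argued by a containment of outage events and must instead be settled by the elementary inequality chain above; after that the proof is immediate. Combining the two halves gives $\mathcal{O}_{co}^{\rm D}<\mathcal{O}_{co}^{\rm B}<\mathcal{O}_{co}^{\rm F}$, as claimed.
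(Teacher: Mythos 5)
Your proof is correct and follows essentially the same route as the paper: the first half is the same pathwise/event-domination argument (the squared coherent sum dominates each individual term, hence the max), and the second half is exactly the paper's chain $\prod_k(1-e^{-d_k})<\min_k(1-e^{-d_k})=1-e^{-\min_k d_k}\le 1-e^{-\frac{1}{K}\sum_k d_k}$. The only difference is that you make explicit the $\min\le\text{mean}$ step and the strictness bookkeeping that the paper compresses into ``which further lies below \eqref{pco_fot}.''
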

\begin{proof}
Clearly, $\left|\sum_{k=1}^K |h_{s_k,o}|r_{s_k,o}^{-\alpha/2}\right|^2$ in \eqref{pco_dbf} is larger than  ${|h_{s_k,o}|^2}{r_{s_k,o}^{-\alpha}}$ in \eqref{pco_bsr}, which infers that the BSR scheme provides a larger COP than the DBF scheme does. Likewise, $\mathcal{O}_{co}$ in \eqref{pco_bsr} is less than $\min\limits_{k=1,\cdots,K}\left(1-e^{-{\beta_{t}r_{s_k,o}^{\alpha}}/{P_s}}\right)=1 - e^{-\beta_{t}\min\limits_{k=1,\cdots,K}{r_{s_k,o}^{\alpha}}/{P_s}}$, which further lies below \eqref{pco_fot}. This indicates that the BSR scheme yields a smaller COP than that of the FOT scheme.
\end{proof}

Proposition \ref{proposition_pco_bsr2} declares that although both the DBF and BSR schemes can achieve full diversity, the former can still provide a higher level of reliability since it attains an additional performance gain from the coherent superposition of the received signals.

\subsubsection{SOP}
Content delivery is divided into two hops, and then the content is considered secure only if both the two hops are secured.  Since an eavesdropper decodes the two-hop signals individually, the overall SOP can be characterized as a combination of the individual SOPs for the two hops. Recall the received SNRs at the $j$-th eavesdropper for the two hops, i.e., $\gamma_{b,e_j}$ in \eqref{snr_e_ss1} and $\gamma_{s_{k^*},e_j}$ in \eqref{snr_e_ss2}, and the SOP can be displayed as 
\begin{equation}\label{pso_ss}
\mathcal{O}_{so} =1-\mathbb{E}_{\Phi_e}\left[\prod_{e_j\in\Phi_e}\left(1-\mathcal{O}^{(1)}_{so,e_j}\right)\left(1-\mathcal{O}^{(2)}_{so,e_j}\right)\right] ,
\end{equation}
where $\mathcal{O}^{(1)}_{so,e_j}\triangleq\mathbb{P}\left\{\gamma_{b,e_j}>\beta_e\right\}=e^{-\beta_er_{b,e_j}^{\alpha}/P_m}$ and $\mathcal{O}^{(2)}_{so,e_j}\triangleq\mathbb{P}\left\{\gamma_{s_{k^*},e_j}>\beta_e\right\}=e^{-\beta_er_{s_{k^*},e_j}^{\alpha}/P_s}$ are the SOPs for the two hops.
Invoking the PGFL over a PPP with \eqref{pso_ss} yields
\begin{align}\label{pso_ss_int}
\mathcal{O}_{so} =1-\exp\bigg(-&\lambda_e\int_0^\infty\int_0^{2\pi}\left[1-\left(1-e^{-\beta_er_{b,e}^{\alpha}/P_m}\right)\times\right.\nonumber\\
&\left.\left(1-e^{-\beta_er_{s_k^*,e}^{\alpha}/P_s}\right)\right]rdrd\theta\bigg).
\end{align}

In the following proposition, we uncover that whether the BSR scheme can outperform the DBF and FOT schemes or not depends heavily on the secrecy level of the first hop. 
\begin{proposition}\label{proposition_bsr}
	The BSR scheme can provide a higher secrecy level compared with the DBF and FOT schemes if a sufficiently low $\mathcal{O}^{(1)}_{so,e_j}$ is ensured in the first hop; in contrast, it will achieve the poorest secrecy performance if $\mathcal{O}^{(1)}_{so,e_j}$ is exceedingly large. 
\end{proposition}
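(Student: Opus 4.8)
The plan is to use the MBS power $P_m$ as the tuning parameter controlling the first-hop leakage $\mathcal{O}^{(1)}_{so,e_j}=e^{-\beta_e r_{b,e_j}^{\alpha}/P_m}$ (which is monotonically increasing in $P_m$ for every eavesdropper position), and to trace how $\mathcal{O}_{so}^{\rm B}$ in \eqref{pso_ss_int} sweeps between its two extremes. First I would rewrite the bracketed term of \eqref{pso_ss_int} as $\mathcal{O}^{(2)}_{so,e}+\mathcal{O}^{(1)}_{so,e}\big(1-\mathcal{O}^{(2)}_{so,e}\big)$, which is nondecreasing in $\mathcal{O}^{(1)}_{so,e}$ for each $e$; hence $P_m\mapsto\mathcal{O}_{so}^{\rm B}$ is continuous and monotonically increasing. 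The proposition then amounts to evaluating the two endpoint limits of this map and comparing them with $\mathcal{O}_{so}^{\rm D}$ from \eqref{pso_db} and $\mathcal{O}_{so}^{\rm F}$ from \eqref{pso_ot}.

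For the low-leakage regime I would let $P_m\to 0$, so that $\mathcal{O}^{(1)}_{so,e}\to 0$ pointwise and, by dominated convergence, $\mathcal{O}_{so}^{\rm B}$ tends to the pure second-hop SOP obtained from \eqref{pso_ss_int} by replacing the bracketed term with $e^{-\beta_e r_{s_{k^*},e}^{\alpha}/P_s}$. The key inequality is that $r_{s_{k^*},e}^{-\alpha}$ is one of the summands of $\sum_{k=1}^{K}r_{s_k,e}^{-\alpha}$, whence $r_{s_{k^*},e}^{\alpha}\ge\big(\sum_{k=1}^{K}r_{s_k,e}^{-\alpha}\big)^{-1}$ and the second-hop integrand is pointwise no larger than the integrand $e^{-(\beta_e/P_s)/\sum_{k}r_{s_k,e}^{-\alpha}}$ of \eqref{pso_db}; for $K\ge 2$ the inequality is strict on a set of positive measure, so this limiting SOP lies strictly below $\mathcal{O}_{so}^{\rm D}$, and Proposition \ref{proposition_sop_fot} extends the comparison to $\mathcal{O}_{so}^{\rm F}$. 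Continuity of $P_m\mapsto\mathcal{O}_{so}^{\rm B}$ then yields a threshold below which $\mathcal{O}_{so}^{\rm B}<\min\{\mathcal{O}_{so}^{\rm D},\mathcal{O}_{so}^{\rm F}\}$.

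For the high-leakage regime I would let $P_m\to\infty$: then $\mathcal{O}^{(1)}_{so,e}\to 1$ for every $e$, the bracketed term of \eqref{pso_ss_int} increases to $1$, and by monotone convergence the exponent diverges, so $\mathcal{O}_{so}^{\rm B}\to 1$. In contrast $\mathcal{O}_{so}^{\rm D}$ and $\mathcal{O}_{so}^{\rm F}$ are both strictly below $1$ whenever $P_s,\lambda_e<\infty$ and $\beta_e>0$, since their integrands decay faster than any polynomial as $r\to\infty$ (exponentially in $r^{\alpha}$), so the associated integrals are finite. Monotonicity of $P_m\mapsto\mathcal{O}_{so}^{\rm B}$ then produces a second threshold above which $\mathcal{O}_{so}^{\rm B}>\max\{\mathcal{O}_{so}^{\rm D},\mathcal{O}_{so}^{\rm F}\}$, i.e., BSR becomes the worst of the three.

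The main obstacle is the dependence of the selected index $k^*$ on the instantaneous main-channel gains, which makes $r_{s_{k^*},e}$ a random quantity and, strictly speaking, calls for an extra average over $k^*$ in \eqref{pso_ss_int}. I would circumvent this by observing that the pointwise bound $r_{s_{k^*},e}^{\alpha}\ge\big(\sum_{k}r_{s_k,e}^{-\alpha}\big)^{-1}$ only uses that the $k^*$-th term belongs to the sum and therefore holds for every realization of $k^*$, so it is preserved under the averaging; after that, the argument reduces to the two limiting evaluations and the monotonicity established at the outset, which are routine.
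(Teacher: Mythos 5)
Your proposal is correct and follows essentially the same route as the paper: evaluate the two limiting regimes $\mathcal{O}^{(1)}_{so,e_j}\rightarrow 0$ and $\mathcal{O}^{(1)}_{so,e_j}\rightarrow 1$ and compare the resulting per-eavesdropper leakage terms with those of the DBF and FOT schemes. The only differences are cosmetic refinements: you supply the inequality $r_{s_{k^*},e}^{\alpha}\ge\bigl(\sum_{k}r_{s_k,e}^{-\alpha}\bigr)^{-1}$ that the paper merely asserts, reach the FOT comparison by transitivity through Proposition \ref{proposition_sop_fot} rather than directly, and add the monotonicity-in-$P_m$ bookkeeping.
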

\begin{proof}
	For the case  $\mathcal{O}^{(1)}_{so,e_j}=e^{-\beta_er_{b,e_j}^{\alpha}/P_m}\rightarrow 0$, the SOPs in \eqref{pso_db}, $\eqref{pso_ot}$, and $\eqref{pso_ss_int}$ share similar forms. We can prove that $e^{-\beta_er_{s_k^*,e}^{\alpha}/P_s}$ in $\eqref{pso_ss_int}$ is less than both $e^{-{\beta_e}/\left({P_s\sum_{k=1}^Kr_{s_k,e}^{-\alpha} }\right)}$ in \eqref{pso_db} and $1-\prod^K_{k=1}\left(1-e^{-\beta_{e}r_{s_k,e}^{\alpha}/(KP_s)}\right)$ in \eqref{pso_ot}, which indicates that the BSR scheme yields a smallest SOP among the three schemes. For the case $\mathcal{O}^{(1)}_{so,e_j}\rightarrow 1$, the SOP for the BSR scheme approaches one, which is larger than those of the other two schemes.
\end{proof}

The reason behind Proposition \ref{proposition_bsr} is that the backhaul process is one major bottleneck of the secrecy performance. As will be detailed in the next two sections, the MBS power $P_m$ and the backhaul probability $p_t^{\rm B}$ play a vital role in improving throughput and energy efficiency. 

Note that due to the random mobility of the eavesdroppers, their locations in the two hops can be approximately regarded as two independent PPPs $\Phi^{(1)}_{e}$ and $\Phi^{(2)}_{e}$ with the same density $\lambda_e$. 
In this way, the SOP can be equivalently presented in a more concise form given blow:
\begin{align}\label{pso_bsr_app}
&\mathcal{O}_{so} =1-\prod_{i=1,2}\mathbb{E}_{\Phi^{(i)}_e}\left[\prod_{e_j\in\Phi^{(i)}_e}\left(1-\mathcal{O}^{(i)}_{so,e_j}\right)\right] \nonumber\\
&= 1 - \exp\left(-{\pi\lambda_e}\Gamma\left(1+\frac{2}{\alpha}\right)\left({P_m^{\frac{2}{\alpha}}}+{P_s^{\frac{2}{\alpha}}}\right)\beta_e^{-\frac{2}{\alpha}}\right),
\end{align}
where the last equality follows from the PGFL over a PPP along with \cite[Eqn. (3.216.1)]{Gradshteyn2007Table}.

In Fig. \ref{PCO} and Fig. \ref{PSO}, we depict the COP $\mathcal{O}_{co}$ and the SOP $\mathcal{O}_{so}$ versus the SBS power $P_s$ for the three transmission schemes.
The results of Monte-Carlo simulations match well with the
theoretical values. The asymptotic $\mathcal{O}_{co}$ in \eqref{pco_dbf_asymptotic} and the approximated $\mathcal{O}_{so}$ in \eqref{pso_bsr_app} have high accuracies with their respective exact values. As expected, as $P_s$ increases, $\mathcal{O}_{co}$ decreases and $\mathcal{O}_{so}$ increases. Just as proved in the last three subsections, the DBF scheme outperforms the other two while the FOT scheme provides the poorest reliability performance. For transmission secrecy, the DBF scheme always surpasses the FOT scheme; whereas the BSR scheme yields the largest SOP at the low $P_s$ regime but achieves the smallest SOP at the high $P_s$ regime. The underlying reason is that, compared with the DBF and FOT schemes where $K$ SBSs transmit signals to the eavesdroppers, the BSR scheme only selects a single SBS in the second hop, which makes the wiretapping more difficult if only the first-hop secrecy can be promised. 
\begin{figure}[!t]
	\centering
	\includegraphics[width = 3.0in]{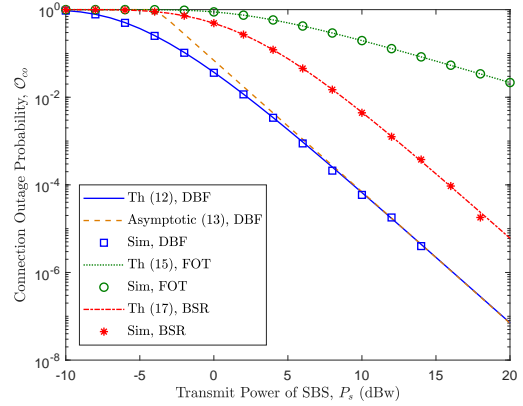}
	\caption{COP $\mathcal{O}_{co}$ vs. $P_s$, with $K=3$ and $\beta_t = 1$. Throughout the experiments in this paper, for simplicity, we place the typical user, the nearest SBS, and the MBS along a vertical line, and deploy all the SBSs along a horizontal line with an identical distance $r_s$. Unless otherwise specified, we always set $r_{b,s_1}=2$, $r_{s_1,o}=1$, $r_s=0.5$, and $\alpha=4$.}
	\label{PCO}
\end{figure}
\begin{figure}[!t]
	\centering
	\includegraphics[width = 3.0in]{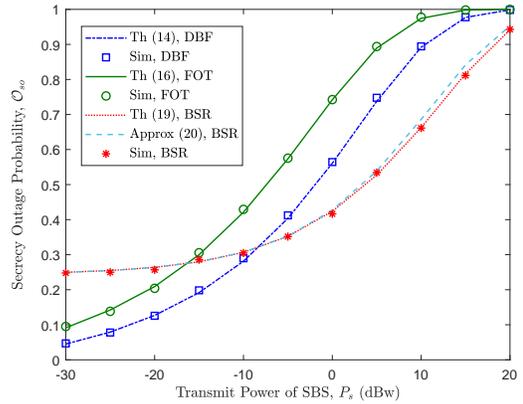}
	\caption{SOP  $\mathcal{O}_{so}$ vs. $P_s$, with $K=5$, $P_m=0$ dBw, $\lambda_e = 0.1$, and $\beta_e = 1$. }
	\label{PSO}
\end{figure}

\section{Secrecy Throughput Maximization}\label{sec_st_max}
{ In this section, we will jointly design the optimal secrecy rate $R_s$ and rate redundancy $R_e$ of the wiretap code and the allocation $M$ of the hybrid caching policy to maximize the overall secrecy throughput $\bar\Psi$ given in \eqref{def_st}. This optimization problem can be formulated as follows:
\begin{subequations}
	\begin{align}\label{overall_st_max_s}
	\max_{R_s^i,R_e^i,M,\forall i\in\{ \rm D,F,B\}} ~&\bar\Psi=\sum_{i\in\{ \rm D,F,B\}}p_t^{i}\left(1-\mathcal{O}^i_{co}\right)R_s^i,\\
	~~~{\rm s.t.} \label{overall_st_max_c1}
	~&R_s^i\geq  0, R_e^i\ge 0,\\
	\label{overall_st_max_c2}
	~&\mathcal{O}_{so}^i\le\epsilon,\\
	\label{overall_st_max_c3}
	~&0\le M \le L,
	\end{align}
\end{subequations}
where \eqref{overall_st_max_c1}, \eqref{overall_st_max_c2}, and \eqref{overall_st_max_c3} are the constraints of the wiretap code rates, the maximal tolerable SOP, and the caching capacity, respectively, and $p_t^{i}$, $\mathcal{O}_{co}^i$, and $\mathcal{O}_{so}^i$ denote the probability of adopting the transmission scheme $i\in\{\rm D,F,B\}$ and the corresponding COP and SOP, respectively. }

Observing that only $p_t^i$ in $\bar\Psi$ depends on $M$, the primary problem \eqref{overall_st_max_s} can be decomposed into two steps: 1) first maximizing $\Psi^i=\left(1-\mathcal{O}_{co}^i\right)R_s^i$ subject to $\mathcal{O}^i_{so}\le\epsilon$ over $R_s^i$ and $R_e^i$ for scheme $i\in\{ \rm D,F,B\}$; 2) then maximizing $\bar\Psi$ over $M$. In what follows, we execute the
optimization procedure step by step.
To begin with, we formulate the first-step problem uniformly as below:
\begin{equation}\label{st_max_s}
\max_{R_s\ge0,R_e\ge0} ~\Psi=\left(1-\mathcal{O}_{co}\right)R_s,
~~{\rm s.t.} ~
 \mathcal{O}_{so}\le\epsilon.
\end{equation}

Intuitively, neither a too small nor a too large $R_s$ can lead to a high $\Psi$ due to the reverse behavior of $\mathcal{O}_{co}$.
Hence, $R_s$ should be carefully chosen to balance the rate and outage.
Once $R_s$ is fixed, $\Psi$ monotonically decreases with $R_e$. This suggests that we should set $R_e$ as small as possible in order to maximize $\Psi$. Denote the minimal $R_e$ as $R_e^{\circ}=\log_2(1+\beta_e^{\circ})$. Obviously, we should satisfy  $\mathcal{O}_{so}(\beta_e^{\circ})=\epsilon$ since $\mathcal{O}_{so}(\beta_e)$ decreases with $\beta_e$. The value of $\beta_e^{\circ}$ can be quickly obtained via a bisection search, which is $\beta_e^{\circ}\triangleq\mathcal{O}_{so}^{-1}(\epsilon)$, where $\mathcal{O}_{so}^{-1}(\epsilon)$ is the inverse function of $\mathcal{O}_{so}(\beta_e)$. With $R_t = R_e^{\circ}+R_s\Rightarrow\beta_t=\beta_e^{\circ}+(1+\beta_e^{\circ})\beta_s$, problem \eqref{st_max_s} can be recast into 
\begin{align}\label{st_max}
\max_{\beta_s\ge 0}~ \Psi=(1-\mathcal{O}_{co})\log_2(1+\beta_s).
\end{align}

\subsection{Optimal $\beta_s$ for the DBF Scheme}\label{sec_st_max_db}
Substituting the COP $\mathcal{O}_{co}$ in \eqref{pco_dbf_asymptotic} into \eqref{st_max} yields
\begin{align}\label{st_max_db}
\max_{\beta_s\ge 0}~ \Psi=\left[1-A_1(\beta_s+B_1)^K\right]\log_2(1+\beta_s),
\end{align}
where $A_1=\frac{2^K}{\Gamma(2K+1)}\left(\frac{1+\beta_e^{\circ}}{P_s}\right)^K\prod_{k=1}^{K}{r_{s_k,o}^{\alpha}}$ and $B_1=\frac{\beta_e^{\circ}}{1+\beta_e^{\circ}}$.
The solution to problem \eqref{st_max_db} is given in the following theorem.
\begin{theorem}\label{theorem_st_db}
The secrecy throughput $\Psi$ for the DBF scheme given in \eqref{st_max_db} is a concave function of $\beta_s$, and the optimal $\beta_s^*$ that maximizes $\Psi$ is characterized by 
\begin{equation}\label{opt_bs_db}
\frac{d\Psi}{d\beta_s^*} = 0,
\end{equation}
i.e., it is the unique zero-crossing of the derivative $\frac{d\Psi}{d\beta_s}$ with
\begin{equation}\label{opt_bs_db_eqn}
\frac{d\Psi}{d\beta_s} = \frac{1-A_1(\beta_s+B_1)^K}{(1+\beta_s)\ln2}-A_1K(\beta_s+B_1)^{K-1}\log_2(1+\beta_s).
\end{equation}
\end{theorem}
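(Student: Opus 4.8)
The plan is to treat the objective in \eqref{st_max_db} as a product $\Psi(\beta_s)=f(\beta_s)g(\beta_s)$ with $f(\beta_s)\triangleq 1-A_1(\beta_s+B_1)^K$ and $g(\beta_s)\triangleq\log_2(1+\beta_s)$, and to exploit the monotonicity and curvature of the two factors separately. On $[0,\infty)$, $g$ is strictly increasing, strictly concave, and positive except $g(0)=0$; $f$ is strictly decreasing and concave (affine when $K=1$), with $f(0)=1-A_1B_1^K$ and a single positive root $\hat\beta_s\triangleq A_1^{-1/K}-B_1$ whenever $A_1B_1^K<1$. I would first dispose of the degenerate high-outage regime $A_1B_1^K\ge 1$: there $f\le 0$ on all of $[0,\infty)$, hence $\Psi\le 0=\Psi(0)$, the optimum is the corner point $\beta_s^*=0$, and $\frac{d\Psi}{d\beta_s}(0)=\frac{1-A_1B_1^K}{\ln 2}\le 0$ is consistent with \eqref{opt_bs_db_eqn}. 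The substantive case is $A_1B_1^K<1$, on which $\Psi>0$ precisely for $\beta_s\in(0,\hat\beta_s)$, i.e. on the region where the asymptotic COP \eqref{pco_dbf_asymptotic} stays below one.

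For the concavity claim I would differentiate twice, writing $\Psi''=f''g+2f'g'+fg''$, and observe that on the operational interval $(0,\hat\beta_s)$ every term is negative: $f''\le 0$ while $g>0$; $f'<0$ while $g'>0$; and, crucially, $f>0$ while $g''<0$. Hence $\Psi$ is strictly concave on $(0,\hat\beta_s)$, so its derivative $\frac{d\Psi}{d\beta_s}$ given in \eqref{opt_bs_db_eqn} is strictly decreasing there.

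Existence and uniqueness of the maximizer then follow from the boundary behaviour of the derivative. At $\beta_s=0$ one has $\frac{d\Psi}{d\beta_s}(0)=f(0)g'(0)=\frac{1-A_1B_1^K}{\ln 2}>0$, and at $\beta_s=\hat\beta_s$ the first term of \eqref{opt_bs_db_eqn} vanishes because $f(\hat\beta_s)=0$, leaving $\frac{d\Psi}{d\beta_s}(\hat\beta_s)=-A_1K(\hat\beta_s+B_1)^{K-1}\log_2(1+\hat\beta_s)<0$. By the strict monotonicity just established, $\frac{d\Psi}{d\beta_s}$ has a unique zero-crossing $\beta_s^*\in(0,\hat\beta_s)$, which is therefore the unique stationary point and the unique maximizer of $\Psi$ on $(0,\hat\beta_s)$. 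Finally, since $\Psi(\beta_s)\le 0=\Psi(0)<\Psi(\beta_s^*)$ for all $\beta_s\ge\hat\beta_s$, the point $\beta_s^*$ is the global maximizer over $[0,\infty)$, which is exactly the characterization \eqref{opt_bs_db}--\eqref{opt_bs_db_eqn}.

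The main obstacle is not a hard computation but a point that must be stated carefully: $\Psi$ is not concave on the whole half-line, because for $\beta_s>\hat\beta_s$ the sign of $f$ flips and the term $fg''$ in $\Psi''$ turns positive, so the unrestricted second derivative can be positive. The resolution is to confine the analysis to $\{\beta_s:\mathcal{O}_{co}\le 1\}=(0,\hat\beta_s)$, which costs nothing: $\Psi$ is non-positive outside this set and the asymptotic COP expression is not even valid there, so the concavity assertion should be read on this operational interval. A secondary care point is the case $K=1$, where $f$ is affine rather than strictly concave but $\Psi''<0$ still holds thanks to $g''<0$, together with the degenerate regime $A_1B_1^K\ge 1$ handled at the outset.
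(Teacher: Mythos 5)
Your characterization of $\beta_s^*$ as the unique zero-crossing and global maximizer is correct, and your treatment of the corner case $A_1B_1^K\ge 1$ (where the asymptotic COP already reaches one at $\beta_s=0$) is a point of rigor the paper glosses over: its proof simply asserts that $\frac{d\Psi}{d\beta_s}$ in \eqref{opt_bs_db_eqn} is monotonically decreasing on all of $\beta_s\ge 0$ (i.e., global concavity), checks that it is positive at $\beta_s=0$ and negative as $\beta_s\to\infty$, and concludes uniqueness of the zero-crossing.

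However, your central structural claim --- that $\Psi$ fails to be concave beyond $\hat\beta_s=A_1^{-1/K}-B_1$ because the term $fg''$ in $\Psi''=f''g+2f'g'+fg''$ turns positive there, so that the theorem's concavity assertion must be reinterpreted on the operational interval --- is wrong. The positive contribution of $fg''$ is always dominated by the cross term $2f'g'$: writing both over a common factor,
\begin{equation*}
2f'g'+fg''=\frac{1}{(1+\beta_s)\ln 2}\left[A_1(\beta_s+B_1)^{K-1}\left(\frac{\beta_s+B_1}{1+\beta_s}-2K\right)-\frac{1}{1+\beta_s}\right],
\end{equation*}
and since $B_1=\beta_e^{\circ}/(1+\beta_e^{\circ})<1$ gives $(\beta_s+B_1)/(1+\beta_s)<1\le 2K$, the bracket is strictly negative for every $\beta_s\ge 0$. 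Together with $f''g\le 0$ this yields $\Psi''<0$ on the entire half-line, exactly as the theorem states. So your restriction to $(0,\hat\beta_s)$ is harmless for the optimization conclusion but unnecessary, and the ``main obstacle'' you identify does not exist: analyzing the signs of the three terms of $\Psi''$ in isolation is too coarse, and the inequality $B_1<1$ is the ingredient you did not use.
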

\begin{proof}
We can easily show that $\frac{d\Psi}{d\beta_s}$ in \eqref{opt_bs_db_eqn} monotonically decreases with $\beta_s$.
Hence, $\Psi$ in \eqref{st_max_db} is concave on $\beta_s$. 
Next, we prove that $\frac{d\Psi}{d\beta_s}$ is positive at $\beta_s=0$ and is negative as $\beta_s\rightarrow\infty$. 
Therefore, there exists a unique zero-crossing of  $\frac{d\Psi}{d\beta_s}$, which is the solution to problem \eqref{st_max_db}.
\end{proof}

Due to the concavity of $\Psi$ on $\beta_s$, the optimal $\beta_s^*$ can be efficiently calculated using the Newton's method with \eqref{opt_bs_db}.
Although $\beta_s^*$ appears in an implicit form, some insights into its behavior can still be developed in the following corollary.
\begin{corollary}\label{corollary_opt_bs_db}
The optimal $\beta_s^*$ satisfying \eqref{opt_bs_db} increases with the threshold SOP $\epsilon$ and decreases with the eavesdropper density $\lambda_e$, the SBS-user distance $r_{s_k,o}$, and the path-loss exponent $\alpha$.
\end{corollary}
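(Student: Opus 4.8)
The plan is to treat Corollary~\ref{corollary_opt_bs_db} as a monotone comparative-statics statement built on Theorem~\ref{theorem_st_db}. From that theorem we already know that, for fixed system parameters, $\frac{d\Psi}{d\beta_s}$ in \eqref{opt_bs_db_eqn} is strictly decreasing in $\beta_s$ on $(0,\infty)$, positive as $\beta_s\to 0^+$, negative as $\beta_s\to\infty$, and vanishes at the unique point $\beta_s^*$. The key observation is that the parameters $\epsilon$, $\lambda_e$, $r_{s_k,o}$, $\alpha$ enter $\frac{d\Psi}{d\beta_s}$ only through the two positive constants $A_1$ and $B_1$ in \eqref{st_max_db}. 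Hence, if I show that $\frac{d\Psi}{d\beta_s}$ is, for each fixed $\beta_s>0$, a decreasing function of $A_1$ and of $B_1$, then any parameter change that raises $A_1$ and/or $B_1$ shifts the curve $\beta_s\mapsto\frac{d\Psi}{d\beta_s}$ downward pointwise, which—because that curve is monotonically decreasing—moves its zero-crossing $\beta_s^*$ to the left, and vice versa. The corollary then reduces to tracking the signs of the variations of $A_1$ and $B_1$ with respect to each parameter.

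First I would establish the monotonicity lemma. Differentiating \eqref{opt_bs_db_eqn} in $A_1$ gives $-\frac{(\beta_s+B_1)^K}{(1+\beta_s)\ln 2}-K(\beta_s+B_1)^{K-1}\log_2(1+\beta_s)$, which is strictly negative for $\beta_s>0$ since $\beta_s+B_1>0$ and $\log_2(1+\beta_s)\ge 0$. Differentiating in $B_1$ gives $-\frac{A_1K(\beta_s+B_1)^{K-1}}{(1+\beta_s)\ln 2}-A_1K(K-1)(\beta_s+B_1)^{K-2}\log_2(1+\beta_s)$, strictly negative already from the first term. So the comparative-statics conclusion above is rigorous: if $(A_1,B_1)$ increases componentwise then $\beta_s^*$ strictly decreases once either coordinate strictly increases, and conversely.

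Next I would read off the parameter dependence of $A_1$ and $B_1$. The threshold $\epsilon$ and the density $\lambda_e$ affect the problem only through $\beta_e^{\circ}=\mathcal{O}_{so}^{-1}(\epsilon)$. Since $\mathcal{O}_{so}(\beta_e)$ is decreasing in $\beta_e$ and increasing in $\lambda_e$ (noted right after \eqref{pso_db}), $\beta_e^{\circ}$ is decreasing in $\epsilon$ and increasing in $\lambda_e$; and since $A_1\propto(1+\beta_e^{\circ})^K$ and $B_1=1-(1+\beta_e^{\circ})^{-1}$ are both increasing in $\beta_e^{\circ}$, both $A_1$ and $B_1$ are decreasing in $\epsilon$ and increasing in $\lambda_e$. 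Chaining with the lemma gives that $\beta_s^*$ increases with $\epsilon$ and decreases with $\lambda_e$. For $r_{s_k,o}$ and $\alpha$, the explicit factor $\prod_{k=1}^K r_{s_k,o}^{\alpha}=\exp\!\big(\alpha\sum_{k}\ln r_{s_k,o}\big)$ in $A_1$ is strictly increasing in each $r_{s_k,o}$ (for any $\alpha>0$) and strictly increasing in $\alpha$ when the SBS--user distances have geometric mean exceeding one, which pushes $A_1$ up and hence $\beta_s^*$ down.

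The main obstacle is that $r_{s_k,o}$ and $\alpha$ do not enter the problem only through that explicit factor: they also enter $\beta_e^{\circ}$ implicitly through the wiretap-channel distances $r_{s_k,e}=\sqrt{r_{s_k,o}^2+r^2-2r_{s_k,o}r\cos(\theta_{s_k}-\theta)}$ inside the SOP integral in \eqref{pso_db}, and the sign of this secondary effect on $\beta_e^{\circ}$ is not obvious, since displacing an SBS or changing $\alpha$ can raise or lower $\sum_k r_{s_k,e}^{-\alpha}$ depending on the eavesdropper's position. To close the argument cleanly I would either (i) prove a monotonicity estimate on the SOP integral showing that this secondary effect reinforces, or is dominated by, the direct effect through $\prod_k r_{s_k,o}^{\alpha}$, or, more simply, (ii) regard $\beta_e^{\circ}$ as the fixed design value that exhausts the SOP budget $\mathcal{O}_{so}=\epsilon$ and vary only the remaining free quantities; under reading (ii) the dependence of $\beta_s^*$ on $r_{s_k,o}$ and $\alpha$ is carried entirely by the $\prod_k r_{s_k,o}^{\alpha}$ factor in $A_1$ and the claimed monotonicities are immediate from the lemma. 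I would adopt (ii) as the intended interpretation and flag (i) as the remaining subtlety.
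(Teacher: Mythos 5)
Your proposal is correct and follows essentially the same route as the paper: the paper's proof is exactly the comparative-statics argument via the implicit-function rule, $\frac{d\beta_s^*}{dA_1}=-\frac{\partial Z/\partial A_1}{\partial Z/\partial \beta_s^*}<0$ (using the negativity of $\partial Z/\partial\beta_s^*$ established in Theorem~\ref{theorem_st_db}), followed by the observation that $A_1$ increases with $r_{s_k,o}$, $\alpha$, and $\beta_e^{\circ}$ while $\beta_e^{\circ}$ decreases with $\epsilon$ and increases with $\lambda_e$. Your version is in fact more careful than the paper's on three points the published proof silently glosses over: you also track the dependence through $B_1$ (the paper varies only $A_1$ even though $B_1$ also depends on $\beta_e^{\circ}$), you note that $\prod_k r_{s_k,o}^{\alpha}$ is increasing in $\alpha$ only when the geometric mean of the distances exceeds one, and you flag the secondary dependence of $\beta_e^{\circ}$ on $r_{s_k,o}$ and $\alpha$ through the SOP integral, which the paper implicitly resolves by your reading (ii).
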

\begin{proof}
Let $Z(\beta_s^*,A_1)$ denote $\frac{d\Psi}{d\beta_s} $ in \eqref{opt_bs_db_eqn} with $\beta_s$ replaced by $\beta_s^*$. Then, the derivative $\frac{d\beta_s^*}{dA_1}$ can be calculated by using the derivative rule for implicit functions \cite{Zheng2015Multi} with \eqref{opt_bs_db},
\begin{equation}
\frac{d\beta_s^*}{dA_1}= -\frac{\partial Z/\partial A_1}{\partial Z/\partial \beta_s^*}<0.
\end{equation}
By realizing that $A_1$ is an increasing function of $r_{s_k,o}$, $\alpha$, and $\beta_e^{\circ}$, and meanwhile $\beta_e^{\circ}$ decreases with $\epsilon$ and increases with $\lambda_e$, the proof can be completed.
\end{proof}

\subsection{Optimal $\beta_s$ for the FOT Scheme}\label{sec_st_max_ot}
Plugging the COP $\mathcal{O}_{co}$ in \eqref{pco_fot} into \eqref{st_max}, we have 
\begin{align}\label{st_max_ot}
\max_{\beta_s\ge 0}~ \Psi=A_2e^{-B_2\beta_s}\log_2(1+\beta_s),
\end{align}
where $A_2 = e^{-\left({\beta_e^{\circ}}/{P_s}\right)\sum_{k=1}^Kr_{s_k,o}^\alpha}$ and $B_2=\frac{1+\beta_e^{\circ}}{P_s}\sum_{k=1}^Kr_{s_k,o}^\alpha$. 
The solution to problem \eqref{st_max_ot} is presented by the following theorem.
\begin{theorem}\label{theorem_opt_bs_fot}
The secrecy throughput $\Psi$ for the FOT scheme in \eqref{st_max_ot} is quasi-concave on $\beta_s$, and the optimal $\beta_s^*$ maximizing $\Psi$ is 
the unique zero-crossing of the following derivative $\frac{d\Psi}{d\beta_s}$,
\begin{equation}\label{opt_bs_ot_eqn}
\frac{d\Psi}{d\beta_s}= -\frac{A_2e^{-B_2\beta_s}}{\ln2}\left(B_2\ln(1+\beta_s)-\frac{1}{1+\beta_s}\right).
\end{equation}
\end{theorem}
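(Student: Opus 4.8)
The plan is to analyze the single-variable function $\Psi(\beta_s)=A_2 e^{-B_2\beta_s}\log_2(1+\beta_s)$ on $\beta_s\ge 0$ and show it is quasi-concave with a unique interior maximizer determined by the stationarity condition. First I would factor the derivative as in \eqref{opt_bs_ot_eqn}: since $A_2>0$, $e^{-B_2\beta_s}>0$, and $\ln 2>0$ are all strictly positive, the sign of $\frac{d\Psi}{d\beta_s}$ is exactly the opposite of the sign of the auxiliary function $g(\beta_s)\triangleq B_2\ln(1+\beta_s)-\frac{1}{1+\beta_s}$. So the whole problem reduces to understanding the sign changes of $g$.

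Next I would check that $g$ is strictly increasing on $[0,\infty)$: its derivative is $g'(\beta_s)=\frac{B_2}{1+\beta_s}+\frac{1}{(1+\beta_s)^2}>0$ since $B_2>0$. Then I would evaluate the endpoints: $g(0)=B_2\ln 1-1=-1<0$, and $g(\beta_s)\to+\infty$ as $\beta_s\to\infty$ because the logarithmic term diverges while $\frac{1}{1+\beta_s}\to 0$. By the intermediate value theorem and strict monotonicity, $g$ has exactly one zero $\beta_s^*>0$, with $g<0$ on $[0,\beta_s^*)$ and $g>0$ on $(\beta_s^*,\infty)$. Translating back through the sign flip: $\frac{d\Psi}{d\beta_s}>0$ on $[0,\beta_s^*)$ and $<0$ on $(\beta_s^*,\infty)$, so $\Psi$ is strictly increasing then strictly decreasing — hence quasi-concave — and $\beta_s^*$ is its unique global maximizer, characterized as the unique zero-crossing of \eqref{opt_bs_ot_eqn}. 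I would also note $\Psi(0)=0$ and $\Psi(\beta_s)\to 0$ as $\beta_s\to\infty$ (exponential decay beats logarithmic growth), which independently confirms the interior maximum.

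I do not anticipate a serious obstacle here; the main thing to get right is simply the bookkeeping that $\Psi$ need not be concave (unlike the DBF case in Theorem~\ref{theorem_st_db}) — the product of a log and a decaying exponential generally has an inflection — so one must argue quasi-concavity via the single sign change of $g$ rather than via second-order conditions. The only mild subtlety is confirming that $B_2>0$ strictly (which holds since $\beta_e^\circ\ge 0$, $P_s>0$, and $\sum_k r_{s_k,o}^\alpha>0$), guaranteeing both $g'>0$ and the divergence of $g$ at infinity; if $B_2$ could vanish the argument would degenerate, but that does not occur under the model assumptions. Everything else is a routine monotonicity-plus-intermediate-value argument, so the proof should be short.
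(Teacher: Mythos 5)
Your proof is correct. It follows the same overall strategy as the paper's proof of Theorem~\ref{theorem_opt_bs_fot} -- establish that $\frac{d\Psi}{d\beta_s}$ is positive at $\beta_s=0$, eventually negative, and crosses zero exactly once -- but the mechanism for the uniqueness step differs. The paper first deduces the existence of a zero-crossing from the boundary signs and continuity, then computes the second derivative at an arbitrary stationary point, finds it is strictly negative there, and invokes this local second-order condition to conclude quasi-concavity and hence uniqueness of the crossing. You instead strip away the strictly positive factor $\frac{A_2 e^{-B_2\beta_s}}{\ln 2}$ and argue directly that the sign-determining function $g(\beta_s)=B_2\ln(1+\beta_s)-\frac{1}{1+\beta_s}$ is strictly increasing from $g(0)=-1$ to $+\infty$, so it has exactly one zero and the derivative of $\Psi$ changes sign exactly once from positive to negative. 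Your route is slightly more economical: it avoids the second-derivative computation entirely and delivers the single-crossing property in one stroke rather than inferring it from the behavior at stationary points; it also makes explicit the dependence on $B_2>0$, which the paper leaves implicit. The paper's route generalizes more readily to cases where the positive prefactor cannot be cleanly factored out (as in the BSR objective of Sec.~\ref{sec_st_max_ss}, where the paper resorts to the same first-positive-then-negative observation without an explicit auxiliary function). Both arguments are complete and reach the identical characterization of $\beta_s^*$ via \eqref{opt_bs_ot_eqn}.
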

\begin{proof}
It can be proved that the two boundaries are $\frac{d\Psi}{d\beta_s}|_{\beta_s=0}=A_2>0$ and $\frac{d\Psi}{d\beta_s}|_{\beta_s\rightarrow\infty}<0$.
Due to the continuity of $\Psi$, there is at least one zero-crossing of  $\frac{d\Psi}{d\beta_s}$.
Denote an arbitrary one as $\beta_s^*$ such that $B_2\ln(1+\beta_s^*)-\frac{1}{1+\beta_s^*}=0$.
Then, we have the second derivative $\frac{d^2\Psi}{d\beta_s^2}$ at $\beta_s=\beta_s^*$,
\begin{equation}
\frac{d^2\Psi}{d\beta_s^2}|_{\beta_s=\beta_s^*}=-\frac{A_2e^{-B_2\beta_s}}{(1+\beta_s)\ln2}\left(B_2+\frac{1}{1+\beta_s}\right)<0.
\end{equation}
This indicates that $\Psi$ is a quasi-concave function of $\beta_s$ \cite{Boyd2004Convex}.
Hence, $\beta_s^*$ is the unique zero-crossing of $\frac{d\Psi}{d\beta_s}$, i.e., it is the solution to problem \eqref{st_max_ot}.
\end{proof}

Note that the derivative $\frac{d\Psi}{d\beta_s}$ in \eqref{opt_bs_ot_eqn} is initially positive and then negative as $\beta_s$ increases, and hence $\beta_s^*$ can be easily derived through a bisection search with the equation $\frac{d\Psi}{d\beta_s} = 0$.
Furthermore, following similar steps as Corollary \ref{corollary_opt_bs_db}, the same conclusions can be made on the relationship between the optimal $\beta_s^*$ and the parameters $\epsilon$, $r_{s_k,o}$, and $\alpha$.

\subsection{Optimal $\beta_s$ for the BSR Scheme}\label{sec_st_max_ss}
Inserting the COP $\mathcal{O}_{co}$ in \eqref{pco_bsr} into \eqref{st_max} arrives at
\begin{align}\label{st_max_ss}
\max_{\beta_s\ge 0} \Psi=\frac{\log_2(1+\beta_s)}{2}\left[1-\prod_{k=1}^{K}\left(1-A_{3,k}e^{-B_{3,k}\beta_s}\right)\right],
\end{align}
where $A_{3,k} = e^{-{\beta_e^{\circ}}r_{s_k,o}^\alpha/{P_s}}$ and $B_{3,k}=\frac{1+\beta_e^{\circ}}{P_s}r_{s_k,o}^\alpha$. Note that the factor $\frac{1}{2}$ exists due to a two-hop delivery process.
Although it is difficult to determine the concavity of $\Psi$ with respect to $\beta_s$, one can easily verify that the derivative $\frac{d\Psi}{d\beta_s}$ is first positive and then negative as $\beta_s$ increases, which implies  that there exists a unique zero-crossing of $\frac{d\Psi}{d\beta_s}$.
Hence, the optimal $\beta_s^*$ maximizing $\Psi$ can be numerically computed by setting  $\frac{d\Psi}{d\beta_s^*}$ to zero. In the following proposition, we further provide a sub-optimal $\beta_s^{\circ}$ that can maximize a lower bound for $\Psi$ in \eqref{st_max_ss}. 
\begin{proposition}
	The value of $\Psi$ in \eqref{st_max_ss} is lower bounded by $ \Psi^{\circ}=\frac{\log_2(1+\beta_s)}{2}A_{3,k^{\circ}}e^{-B_{3,k^{\circ}}\beta_s}$, with index $k^{\circ}=\arg_{k=1,\cdots,K}\min r_{s_k,o}$. The optimal $\beta_s^{\circ}$ that maximizes $\Psi^{\circ}$ shares the same form as $\beta_s^*$ shown in \eqref{opt_bs_ot_eqn}, simply with $A_2$ and $B_2$ therein replaced with $A_{3,k^{\circ}}$ and $B_{3,k^{\circ}}$, respectively.
\end{proposition}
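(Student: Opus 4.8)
The plan is to split the claim into two independent pieces: first establishing the lower bound $\Psi\ge\Psi^{\circ}$, and then characterizing the maximizer $\beta_s^{\circ}$ of $\Psi^{\circ}$.

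For the bound I would introduce the shorthand $x_k\triangleq A_{3,k}e^{-B_{3,k}\beta_s}$, $k=1,\dots,K$. Substituting the definitions of $A_{3,k}$ and $B_{3,k}$ gives $x_k=\exp\!\big(-\tfrac{r_{s_k,o}^{\alpha}}{P_s}(\beta_e^{\circ}+(1+\beta_e^{\circ})\beta_s)\big)$, from which two facts are immediate: (i) $x_k\in[0,1]$ for every $k$ and every $\beta_s\ge0$, since $\beta_e^{\circ}\ge0$; and (ii) $x_k$ is strictly decreasing in $r_{s_k,o}$. Fact (i) makes each factor $1-x_k$ lie in $[0,1]$, hence $\prod_{k=1}^{K}(1-x_k)\le 1-x_{k^{\circ}}$ and therefore $1-\prod_{k=1}^{K}(1-x_k)\ge x_{k^{\circ}}$; fact (ii) is exactly why the index $k^{\circ}=\arg_{k}\min r_{s_k,o}$ is the natural, i.e.\ tightest, choice, because it selects $\max_k x_k$. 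Multiplying through by the nonnegative factor $\tfrac{1}{2}\log_2(1+\beta_s)$ then delivers $\Psi\ge\tfrac{1}{2}\log_2(1+\beta_s)\,x_{k^{\circ}}=\Psi^{\circ}$.

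For the maximizer I would observe that $\Psi^{\circ}=\tfrac{1}{2}A_{3,k^{\circ}}e^{-B_{3,k^{\circ}}\beta_s}\log_2(1+\beta_s)$ is, up to the positive constant $\tfrac{1}{2}$, precisely the FOT objective \eqref{st_max_ot} with $(A_2,B_2)$ replaced by $(A_{3,k^{\circ}},B_{3,k^{\circ}})$. A positive multiplicative constant affects neither quasi-concavity nor the location of the stationary point, so Theorem \ref{theorem_opt_bs_fot} transfers verbatim: $\Psi^{\circ}$ is quasi-concave in $\beta_s$, its derivative is first positive then negative, and $\beta_s^{\circ}$ is the unique zero-crossing of \eqref{opt_bs_ot_eqn} after the substitution $A_2\to A_{3,k^{\circ}}$, $B_2\to B_{3,k^{\circ}}$, obtainable by a bisection search.

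There is no genuine obstacle here; the only points requiring care are checking $x_k\in[0,1]$ so that the elementary inequality $1-\prod_k(1-x_k)\ge x_{k^{\circ}}$ is legitimate, acknowledging that replacing $\Psi$ by $\Psi^{\circ}$ sacrifices tightness (it discards the diversity contribution of the remaining $K-1$ SBSs, which is acceptable since only a bound is claimed), and remembering to retain the $\tfrac{1}{2}$ prefactor when reporting the value of $\Psi^{\circ}$ even though it is immaterial to the optimization.
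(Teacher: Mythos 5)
Your proof is correct and follows essentially the same route as the paper: bound the product $\prod_{k}\bigl(1-A_{3,k}e^{-B_{3,k}\beta_s}\bigr)$ by the single factor indexed by $k^{\circ}$ to obtain $\Psi\ge\Psi^{\circ}$, then transfer the argument of Theorem~\ref{theorem_opt_bs_fot} with $(A_2,B_2)$ replaced by $(A_{3,k^{\circ}},B_{3,k^{\circ}})$. In fact your version of the key inequality, $\prod_{k=1}^{K}(1-x_k)\le 1-x_{k^{\circ}}$ with $x_k=A_{3,k}e^{-B_{3,k}\beta_s}\in[0,1]$, is the correct statement; the paper writes the bound as $\prod_{k=1}^{K}\bigl(1-A_{3,k}e^{-B_{3,k}\beta_s}\bigr)<A_{3,k^{\circ}}e^{-B_{3,k^{\circ}}\beta_s}$, which is false in general and is evidently a typo for the inequality you prove.
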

\begin{proof}
	We obtain the lower bound $\Psi^{\circ}$ from \eqref{st_max_ss} by realizing that $\prod_{k=1}^{K}\left(1-A_{3,k}e^{-B_{3,k}\beta_s}\right)<A_{3,k^{\circ}}e^{-B_{3,k^{\circ}}\beta_s}$. Then, the proof can be completed following the proof of Theorem \ref{theorem_opt_bs_fot}.
\end{proof}

In Fig. \ref{ST_RS}, we plot the secrecy throughput $\Psi$ as a function of the secrecy rate $R_s$. As proved in the above three subsections, $\Psi$ initially increases and then decreases with $R_s$, and there is a unique $R_s$ maximizing $\Psi$. We show that by exploiting the instantaneous CSI of the main channels, the DBF scheme attains a significant throughput gain over the other two schemes.
Besides, the FOT scheme is superior to the BSR scheme at the low $R_s$ regime whereas becomes inferior at the high $R_s$ regime. The cause behind is that, due to the bottleneck of the first-hop secrecy, the BSR scheme requires a larger rate redundancy $R_e$ than the FOT scheme does. Hence, at the low $R_s$ regime, the codeword rate $R_t = R_s+R_e$ and the resulting COP for the BSR scheme become remarkably larger than those for the FOT scheme. However, at the high $R_s$ regime, the superiority of the BSR scheme in terms of reliability is demonstrated, which counterbalances the adverse impact of the first-hop secrecy. We also observe that, $\Psi$ improves for a larger acceptable SOP $\epsilon$, and the optimal $R_s$ increases accordingly, just as indicated in Corollary \ref{corollary_opt_bs_db}.

\begin{figure}[!t]
	\centering
	\includegraphics[width = 3.0in]{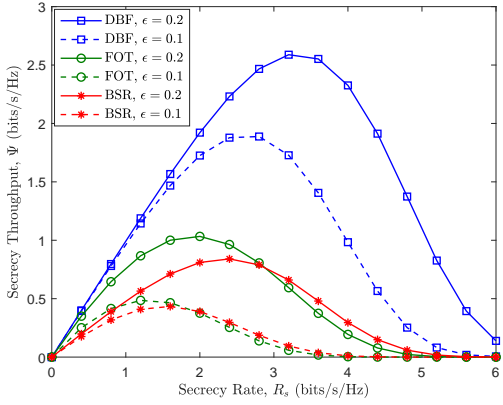}
	\caption{$\Psi$ vs. $R_s$ for different values of $\epsilon$, with $K=2$, $P_m=10$ dBw, $P_s=10$ dBw, and $\lambda_e=0.01$. }
	\label{ST_RS}
\end{figure}

Fig. \ref{MAX_ST_OPT_RS} depicts the maximal secrecy throughput $\Psi^*$ for the proposed transmission schemes. The high accuracy of the theoretical results to the simulations is confirmed. As expected, throughput performance deteriorates for a larger eavesdropper density $\lambda_e$. Owing to the adaption of transmission rates, increasing the SBS power $P_s$ can always improve secrecy throughput. Nevertheless, as $P_s$ grows large, the throughput gain becomes negligible due to the SOP constraint. We also find that the FOT scheme outperforms the BSR scheme at the low SNR regime but becomes inferior at the high SNR regime, which are quite similar to the observations in Fig. \ref{ST_RS}. 

\begin{figure}[!t]
	\centering
	\includegraphics[width = 3.0in]{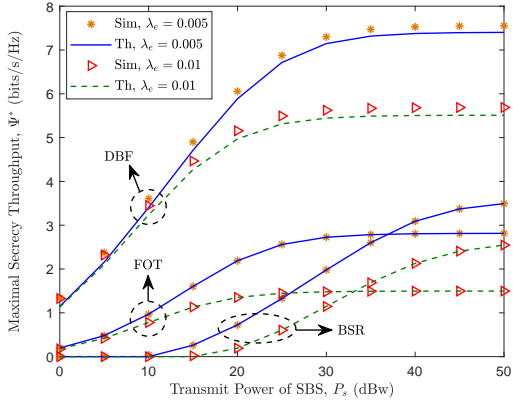}
	\caption{$\Psi^*$ vs. $P_s$ for different values of $\lambda_e$, with $K=3$, $P_m=40$ dBw, and $\epsilon=0.3$. }
	\label{MAX_ST_OPT_RS}
\end{figure}

\subsection{Optimal Caching Allocation Design}\label{sec_st_max_cache}
Having obtained the maximal secrecy throughput for each transmission scheme, we will determine the optimal caching allocation $M$ to maximize the overall secrecy throughput $\bar\Psi = \sum_{i\in\{ \rm D,F,B\}}p_t^{i} \Psi^{i}$.
{This subsection focuses on a limited caching capacity case where the total caching capacity is less than the file number, i.e., $KL<N$. In other words, whatever $M$ we choose, the cache miss event can happen if the popularity order of the requested file $F_n$ is $ n > M + K(L-M)$. The case $KL\ge N$ will be discussed in the next subsection.}

Before proceeding to the optimization problem, we first calculate the probabilities $p_t^{i}$ for $i\in\{\rm D,F,B\}$ as per the caching policy described in Sec. II, 
\begin{equation}\label{pt}
p_t^{\rm D} =\sum_{m=1}^M p_m,~ p_t^{\rm F} =\sum_{m=M+1}^{M+K(L-M)} p_m,~ p_t^{\rm B} =1-p_t^{\rm D}-p_t^{\rm F},\!
\end{equation}
with $p_m$ being the Zipf probability given in \eqref{zipf}.
For tractability, we approximate  $\sum_{m=1}^M p_m$ as
\begin{equation}\label{approx}
\sum_{m=1}^M p_m\approx \frac{1-(M+1)^{1-\tau}}{1-(N+1)^{1-\tau}},
\end{equation} which derives from $\sum_{m=1}^M m^{-\tau}\approx\int_{1}^{M+1}t^{-\tau}dt=\frac{1-(M+1)^{1-\tau}}{\tau - 1}$. 
Invoking \eqref{approx} with \eqref{pt}, the problem of maximizing $\bar \Psi$ over $M$ can be formulated as follow: 
\begin{equation}\label{max_st}
\max_{0\le M\le L}\bar\Psi = \frac{\Psi^{\rm DB} - \Psi^{\rm DF}(M+1)^{1-\tau}-\Psi^{\rm FB}(K_L-K_1M)^{1-\tau}}{1-(N+1)^{1-\tau}},
\end{equation}
where $K_L \triangleq KL+1$, $K_1\triangleq K-1$, and $\Psi^{\rm DF}\triangleq\Psi^{\rm D}-\Psi^{\rm F}$, $\Psi^{\rm FB}\triangleq\Psi^{\rm F}-\Psi^{\rm B}$, and $\Psi^{\rm DB}\triangleq \Psi^{\rm D}-\Psi^{\rm B}(N+1)^{1-\tau}$ reflect the throughput gaps between two transmission schemes. 
We always have $\Psi^{\rm DF}>0$ since the DBF scheme offers lower COP and SOP than those of the FOT scheme. 

The solution to problem \eqref{max_st} is provided below, with the proof relegated Appendix \ref{appendix_theorem_opt_m2}. 
\begin{theorem}\label{theorem_opt_m2}
With hybrid MPC and LCD caching placement, the optimal allocation of MPC caching $M_T^*$ that maximizes the overall secrecy throughput $\bar\Psi$ in \eqref{max_st} is given by
\begin{align}\label{opt_m_st}
M_T^{*} 
=\begin{cases}
L, & \Psi^{\rm DF}>K_1\Psi^{\rm FB},\\
0, & \Psi^{\rm DF}<{K_1}{K_L^{-\tau}}\Psi^{\rm FB},\\
\left\lceil L-\frac{L+1}{K\Lambda+1}\right\rceil,& \rm otherwise,\\
\end{cases}
\end{align}
where $\Lambda = \left[\left({K_1\Psi^{\rm FB}}/{\Psi^{\rm DF}}\right)^{\frac{1}{\tau}}-1\right]^{-1}\in\left({K_L^{-1}},\infty\right)$.
\end{theorem}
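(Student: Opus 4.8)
The plan is to treat $M$ as a continuous variable on $[0,L]$, locate all stationary points of the objective $\bar\Psi(M)$, and then discretize via the ceiling operation at the end. Because the denominator $1-(N+1)^{1-\tau}$ is a positive constant independent of $M$ (assuming $\tau>1$; the case $\tau<1$ flips the sign but can be handled analogously, or absorbed into the approximation), it suffices to maximize the numerator $g(M) \triangleq \Psi^{\rm DB}-\Psi^{\rm DF}(M+1)^{1-\tau}-\Psi^{\rm FB}(K_L-K_1 M)^{1-\tau}$. First I would compute the derivative
\begin{equation}\label{prop_gprime}
g'(M) = (\tau-1)\Psi^{\rm DF}(M+1)^{-\tau} - (\tau-1)K_1\Psi^{\rm FB}(K_L-K_1 M)^{-\tau}.
\end{equation}
Setting $g'(M)=0$ and rearranging gives $\left(\frac{K_L-K_1 M}{M+1}\right)^{\tau} = \frac{K_1\Psi^{\rm FB}}{\Psi^{\rm DF}}$, i.e. $\frac{K_L-K_1 M}{M+1} = \left(\frac{K_1\Psi^{\rm FB}}{\Psi^{\rm DF}}\right)^{1/\tau}$. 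Solving this linear equation in $M$ yields the interior candidate $M^\circ = L - \frac{L+1}{K\Lambda+1}$ with $\Lambda$ as defined in the statement — this is the algebraic core, and it is routine.

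Next I would determine the nature of this stationary point. The key observation is that $g'(M)$ in \eqref{prop_gprime} is strictly decreasing in $M$ on $[0,L]$: the first term $(M+1)^{-\tau}$ is decreasing, and the second term $K_1\Psi^{\rm FB}(K_L-K_1M)^{-\tau}$ is increasing in $M$ (since $K_L-K_1M$ decreases), hence $-$ that term is decreasing — provided $\Psi^{\rm FB}\ge 0$. When $\Psi^{\rm FB}<0$ a separate short argument (both terms then push $g'$ the same direction, giving monotone $g$ and a boundary optimum) is needed; I would flag this as a case to check but expect it folds into the first branch of \eqref{opt_m_st}. Granting $g$ is concave, the unconstrained maximizer is $M^\circ$, and the constrained maximizer on $[0,L]$ is obtained by projection: it equals $L$ if $M^\circ\ge L$, equals $0$ if $M^\circ\le 0$, and equals $M^\circ$ otherwise. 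Translating $M^\circ\ge L$ and $M^\circ\le 0$ through the definition of $\Lambda$ gives the threshold conditions: $M^\circ\ge L \iff K\Lambda+1\ge L+1 \iff \Lambda\ge L/K$, which via $\Lambda = [(K_1\Psi^{\rm FB}/\Psi^{\rm DF})^{1/\tau}-1]^{-1}$ unwinds to $\Psi^{\rm DF}>K_1\Psi^{\rm FB}$ after checking the sign of the quantity in brackets; similarly $M^\circ\le 0 \iff \frac{L+1}{K\Lambda+1}\ge L \iff \Lambda\le \frac{L+1-KL}{K L}$, which I would massage into $\Psi^{\rm DF}<K_1 K_L^{-\tau}\Psi^{\rm FB}$. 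The range $\Lambda\in(K_L^{-1},\infty)$ claimed at the end should drop out of these same manipulations together with the observation that in the ``otherwise'' regime $K_1\Psi^{\rm FB}/\Psi^{\rm DF}>1$.

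Finally, since $M$ must be an integer and the continuous objective is concave (unimodal) with interior maximizer $M^\circ$, the optimal integer point is either $\lfloor M^\circ\rfloor$ or $\lceil M^\circ\rceil$; the statement commits to $\lceil M^\circ\rceil = \lceil L-\frac{L+1}{K\Lambda+1}\rceil$, so I would either justify that the ceiling is the correct rounding (e.g. by a secondary comparison of $g$ at the two neighbors, or by noting the problem only requires a near-optimal integer allocation) or simply adopt it as the prescribed rounding convention consistent with $M\le L$. The main obstacle I anticipate is not the calculus but the bookkeeping around signs: verifying that $\left(\frac{K_1\Psi^{\rm FB}}{\Psi^{\rm DF}}\right)^{1/\tau}-1$ is positive exactly in the ``otherwise'' branch (so that $\Lambda>0$ and the division preserves inequality directions), and cleanly handling the degenerate cases $\Psi^{\rm FB}\le 0$ and $\tau\le 1$ — these edge cases are where a careless argument would silently break.
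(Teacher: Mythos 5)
Your proposal is correct and follows essentially the same route as the paper's proof in Appendix B: continuous relaxation of $M$, sign analysis of the (monotone) derivative, a three-way case split on its boundary values that translates into the stated thresholds on $\Psi^{\rm DF}/\Psi^{\rm FB}$, and solving the zero-crossing to get $M^\circ = L-\frac{L+1}{K\Lambda+1}$. The edge cases you flag are handled the same way in the paper ($\Psi^{\rm FB}\le 0$ forces $G(M)>0$ and hence $M_T^*=L$, folding into the first branch), so there is no substantive difference.
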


Theorem \ref{theorem_opt_m2} demonstrates that the optimal caching allocation relies significantly on the throughput difference between different transmission schemes. Specifically, i) if the throughput gain of the DBF scheme to the FOT scheme $\Psi^{\rm DF}$ is over $K_1$ times larger than the throughput gain of the FOT scheme to the BSR scheme $\Psi^{\rm FB}$, we have $M_T^*=L$, which suggests that MPC caching is beneficial and we should cache the same files of high popularities; 
ii) if the throughput gain $\Psi^{\rm FB}$ is remarkably larger, i.e., exceeding ${K_L^{\tau}}/{K_1}$ times than $\Psi^{\rm DF}$, we should switch to the LCD caching mode and store different partitions of files in different SBSs; 
iii) for a  moderate situation where ${K_1}{K_L^{-\tau}}\le{\Psi^{\rm DF}}/{\Psi^{\rm FB}}\le K_1$,  there exists an optimal allocation between MPC and LCD caching, which strikes a good balance between secrecy throughput and content diversity.
Additionally, we can prove that the optimal $M_T^*$ increases with the file popularity skewness $\tau$. This is because, as the content popularity becomes more concentrated (i.e., a larger $\tau$), the benefit from caching different files becomes limited in terms of throughput improvement.


{\subsection{Large Caching Capacity Cases}\label{st_large_case}
This subsection examines two large caching capacity cases where the total caching capacity $KL$ or even each SBS's storage capacity $L$ are not less than the number of files $N$, namely, $KL\ge N>L$ and $L\ge N$, respectively. 
Recall our hybrid caching policy, and we know that if we choose $M\le\frac{KL-N}{K-1}$ for the former case, all the $N$ files can be stored in the $K$ SBSs since  $M+K(L-M)\ge N$. In other words, the cache miss event can be avoided. For the latter case, a single SBS can store all the $N$ files, and therefore the cache miss event absolutely will not happen, or equivalently, the BSR scheme never will be activated. 
In what follows, we present the optimal caching allocation for the above two cases, respectively.
\begin{proposition}\label{opt_m_finity1}
	For the case $KL\ge N>L$, the optimal allocation $M_T^{\star}$ of MPC caching that maximizes  $\bar \Psi$ is given by $M_T^{\star} = \max\left\{\frac{KL-N}{K-1},M_T^*\right\}$, where $M_T^*$ is provided by Theorem \ref{theorem_opt_m2}.
\end{proposition}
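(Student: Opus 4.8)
The plan is to partition the feasible set $[0,L]$ at the threshold $M_0\triangleq\frac{KL-N}{K-1}$, optimize $\bar\Psi$ over each of the two sub-intervals $[0,M_0]$ and $[M_0,L]$ separately, and then glue the pieces. First I would record that $M_0\in[0,L]$: indeed $KL\ge N$ forces $M_0\ge 0$, and $N>L$ forces $M_0\le L$. The crucial structural fact is that, with the hybrid policy, $M+K(L-M)\ge N$ holds precisely for $M\le M_0$; hence for $M\le M_0$ all $N$ library files are stored in the SBSs, the cache-miss event never occurs, so $p_t^{\rm B}=0$ and $p_t^{\rm F}=1-p_t^{\rm D}$, whereas for $M\ge M_0$ the probabilities are given by \eqref{pt} and $\bar\Psi$ has the closed form \eqref{max_st}. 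Since $M_0+K(L-M_0)=N$, a one-line substitution shows that the two descriptions agree at the join $M=M_0$ (both give $p_t^{\rm B}=0$), so $\bar\Psi$ is a well-defined, continuous function on $[0,L]$.

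On $[0,M_0]$ I would rewrite $\bar\Psi=p_t^{\rm D}\Psi^{\rm D}+(1-p_t^{\rm D})\Psi^{\rm F}=\Psi^{\rm F}+p_t^{\rm D}\,\Psi^{\rm DF}$. Because $p_t^{\rm D}=\sum_{m=1}^{M}p_m$ is strictly increasing in $M$ and $\Psi^{\rm DF}=\Psi^{\rm D}-\Psi^{\rm F}>0$ (recall the DBF scheme dominates the FOT scheme in both COP and SOP), $\bar\Psi$ is strictly increasing on $[0,M_0]$, so its maximum over this sub-interval is attained at $M=M_0$.

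On $[M_0,L]$ the objective $\bar\Psi$ coincides with \eqref{max_st}, which, by the analysis carried out in the proof of Theorem~\ref{theorem_opt_m2} (Appendix~\ref{appendix_theorem_opt_m2}), is unimodal in $M$ with unconstrained maximizer $M_T^*\in[0,L]$ given by \eqref{opt_m_st}. Maximizing a unimodal function over the sub-interval $[M_0,L]$ simply returns the projection of $M_T^*$ onto that interval, i.e. $\max\{M_0,\min\{M_T^*,L\}\}=\max\{M_0,M_T^*\}$, where the last equality uses $M_T^*\le L$. Combining the two cases: the maximum of $\bar\Psi$ over $[0,M_0]$ is $\bar\Psi(M_0)$, the maximum over $[M_0,L]$ is $\bar\Psi(\max\{M_0,M_T^*\})$, and since $M_0\in[M_0,L]$ we have $\bar\Psi(\max\{M_0,M_T^*\})\ge\bar\Psi(M_0)$. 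Hence the global maximizer over $[0,L]$ is $M_T^{\star}=\max\{M_0,M_T^*\}=\max\left\{\frac{KL-N}{K-1},\,M_T^*\right\}$, as claimed.

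The only genuinely new work beyond Theorem~\ref{theorem_opt_m2} is the bookkeeping of the piecewise probability assignment and the verification that the two formulas for $\bar\Psi$ match at $M_0$; the heavy structural ingredient --- unimodality of \eqref{max_st} and the explicit form of its unconstrained maximizer --- is quoted directly from Theorem~\ref{theorem_opt_m2}, so I do not expect a real obstacle. The one minor caveat is integrality of the caching allocation: if $M$ is restricted to integers one replaces $M_0$ throughout by $\lceil (KL-N)/(K-1)\rceil$, which changes nothing in the argument since monotonicity on $[0,M_0]$ and unimodality on $[M_0,L]$ are preserved under restriction to the integer grid.
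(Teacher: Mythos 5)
Your proof is correct and follows essentially the same route as the paper's: split the feasible set at $\frac{KL-N}{K-1}$, show $\bar\Psi$ is increasing below that threshold (where the cache-miss event vanishes and $\Psi^{\rm DF}>0$ drives monotonicity), invoke Theorem~\ref{theorem_opt_m2} above it, and combine. Your write-up merely makes explicit the continuity at the join and the projection-of-the-unimodal-maximizer step that the paper leaves implicit.
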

\begin{proof}
First consider $M\le \frac{KL-N}{K-1}$, and we have $p_t^{\rm D} =\sum_{m=1}^M p_m$ and $p_t^{\rm F} = 1 - p_t^{\rm D} $. We can prove that $\bar \Psi = p_t^{\rm D}\Psi^{\rm D}+p_t^{\rm F}\Psi^{\rm F} $ increases with $M$ and reaches the maximum at $M = \frac{KL-N}{K-1}$. On the other hand, the optimal $M$ for $M>\frac{KL-N}{K-1}$ simply follows from Theorem \ref{theorem_opt_m2}. The proof can be completed by combing the results for the two situations.
\end{proof}
\begin{proposition}\label{opt_m_finity2}
For the case $L\ge N$, MPC caching can achieve a maximal $\bar \Psi$.
\end{proposition}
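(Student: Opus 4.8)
The plan is to exploit the fact that when $L\ge N$ a single SBS already stores the whole library, so the cache miss event can never occur and the BSR scheme is never activated, i.e., $p_t^{\rm B}=0$ for \emph{every} admissible allocation $M$. First I would make this precise by counting distinct cached files: under the hybrid policy, the number of different files held in the caches equals $M+K(L-M)=KL-(K-1)M$, and for any $M\in[0,L]$ this is at least $KL-(K-1)L=L\ge N$. Hence every file $F_n$ with $n>M$ is stored at the LCD mode, so that $p_t^{\rm F}=\sum_{m=M+1}^{N}p_m=1-p_t^{\rm D}$ and $p_t^{\rm B}=0$. Substituting into \eqref{def_st} collapses the overall secrecy throughput to $\bar\Psi=p_t^{\rm D}\Psi^{\rm D}+(1-p_t^{\rm D})\Psi^{\rm F}=\Psi^{\rm F}+p_t^{\rm D}\big(\Psi^{\rm D}-\Psi^{\rm F}\big)$.

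Next I would invoke $\Psi^{\rm DF}=\Psi^{\rm D}-\Psi^{\rm F}>0$, which holds because the DBF scheme yields both a smaller COP and a smaller SOP than the FOT scheme (Propositions~2 and~\ref{proposition_sop_fot}), and therefore a larger optimized secrecy throughput. It follows that $\bar\Psi$ is strictly increasing in $p_t^{\rm D}=\sum_{m=1}^{M}p_m$, which is itself nondecreasing in $M$. Consequently $\bar\Psi$ is maximized by choosing $M$ as large as the library permits; since there are only $N\le L$ files, the effective maximum is $M=N$, at which $p_t^{\rm D}=1$, $p_t^{\rm F}=p_t^{\rm B}=0$, and $\bar\Psi=\Psi^{\rm D}$. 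This allocation places all $N$ files in every SBS, i.e., it is pure MPC caching, which proves the claim.

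I do not anticipate a genuine obstacle here; the only point requiring care is the bookkeeping in the first step, namely verifying $M+K(L-M)\ge N$ uniformly over $M\in[0,L]$ so that $p_t^{\rm F}=1-p_t^{\rm D}$ holds with equality and the BSR term truly drops out. Once that identity is established, optimality of MPC caching is immediate from the monotonicity of $\bar\Psi$ in $M$ together with $\Psi^{\rm DF}>0$.
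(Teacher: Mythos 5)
Your proof is correct and follows essentially the same route as the paper: with $L\ge N$ the cache miss event vanishes for every admissible $M$, so $\bar\Psi=\Psi^{\rm F}+p_t^{\rm D}\Psi^{\rm DF}$ with $\Psi^{\rm DF}>0$, and monotonicity in $M$ gives optimality of pure MPC caching. Your write-up merely makes explicit the bookkeeping ($M+K(L-M)\ge L\ge N$, hence $p_t^{\rm B}=0$) that the paper's one-line proof leaves implicit.
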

\begin{proof}
	The result follows easily by proving that $\bar \Psi$ increases with $M$ for $M<N$.
\end{proof}

Propositions \ref{opt_m_finity1} and \ref{opt_m_finity2} indicate that when the caching capacity and the number of files are comparable, the optimal caching allocation will be impacted by the file number, which differs from Theorem \ref{theorem_opt_m2}. More precisely, as the caching capacity increases, it is throughput-wise favorable to store more files via MPC caching. The fundamental cause lies in the growing probability of the cache hit event and the superiority of the DBF scheme itself in terms of throughput enhancement.

\begin{figure}[!t]
	\centering
	\includegraphics[width = 3.0in]{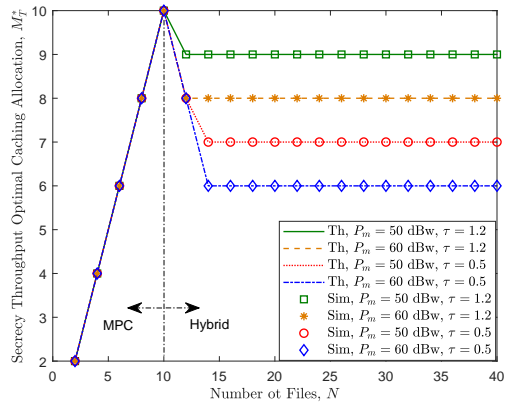}
	\caption{$M_T^*$ vs. $P_s$ for different values of $P_m$ and $\tau$, with $K=2$, $P_s = 20$ dBw, $\lambda_e=0.002$, $\epsilon=0.2$, and $L=10$.  }
	\label{OPT_MT}
\end{figure}
Fig. \ref{OPT_MT} describes the optimal allocation $M_T^*$ versus the number of files $N$.
The simulated optimal $M_T^*$ is obtained by exhaustive
search and matches well with 
the theoretical value, validating the accuracy of the approximation given in \eqref{approx}.
As proved by Propositions \ref{opt_m_finity1} and \ref{opt_m_finity2}, at the large caching capacity regime, i.e., $KL\ge N$, $M_T^*$ first increases linearly and then decreases linearly with $N$. As $N$ increases further, $M_T^*$ remains constant, which coincides with Theorem \ref{theorem_opt_m2}.
In addition, as explained for Theorem \ref{theorem_opt_m2}, $M_T^*$ decreases with the MBS power $P_m$ whereas increases with the content popularity skewness $\tau$.} 

\begin{figure}[!t]
	\centering
	\includegraphics[width = 3.0in]{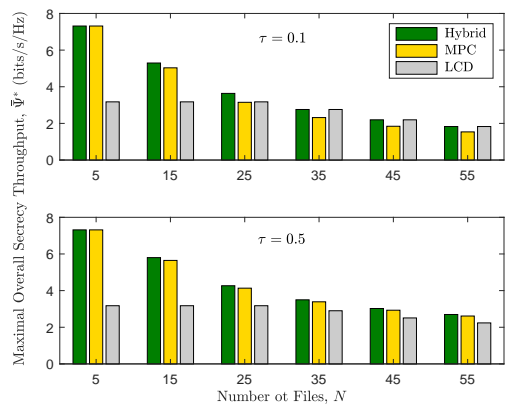}
	\caption{$\bar{\Psi}^*$ vs. $N$ for different values of $\tau$, with $K=3$, $P_m=60$ dBw, $P_s=25$ dBw, $\lambda_e = 0.002$, $\epsilon=0.2$, and $L = 10$. }
	\label{MAX_ST}
\end{figure}
Fig. \ref{MAX_ST} compares the maximal overall secrecy throughput $\bar{\Psi}^*$ for three different caching policies, namely, hybrid caching, MPC, and LCD caching, respectively. We show that hybrid caching always outperforms the other two, since it can strike a good balance between secrecy throughput and content diversity. We also find that $\bar{\Psi}^*$ decreases with $N$, since in this circumstance the cache miss probability increases, which weakens the benefit of caching. Additionally, as the file popularity becomes more concentrated (i.e., a larger $\tau$), MPC caching provides a larger $\bar{\Psi}^*$ than that of LCD caching and the performance is close to that of hybrid caching. This suggests that MPC can be an alternative solution for secrecy throughput maximization at the large $\tau$ regime.

\section{Secrecy Energy Efficiency Maximization}
This section aims to tackle the problem of maximizing the secrecy energy efficiency $\Omega$ defined in \eqref{def_see}, which can be formulated similarly as \eqref{overall_st_max_s} simply with the objective function $\bar\Psi$ replaced with $\Omega = {\bar\Psi}/{P_{\rm avg}}$, where $P_{\rm avg}$ is the total consumed power. Since $P_{\rm avg}$ only depends on the allocation of MPC caching $M$ rather than $R_s$ and $R_e$, this problem can be decomposed into two steps as described in Sec. IV where we have already obtained the maximal secrecy throughput $\Psi^{i}$ for transmission scheme $i\in\{ \rm D,F,B\}$ in the first step.
Hence, we only need to optimize $\Omega$ over $M$. Substitute $p_t^i$ in \eqref{pt} into \eqref{def_see}, and then this sub-problem can be formulated as follows:
\begin{equation}\label{max_see}
\max_{0\le M \le L}\Omega = \frac{\Psi^{\rm DB}\!-\!\Psi^{\rm DF}(M\!+\!1)^{1-\tau}\!-\!\Psi^{\rm FB}(K_L\!-\!K_1M)^{1-\tau}}{\Delta P_1+\Delta P_2(K_L\!-\!K_1M)^{1-\tau}},
\end{equation}
where $\Delta P_1\triangleq KP_s - (P_m+P_s)(N+1)^{1-\tau}$, $\Delta P_2\triangleq P_m - K_1P_s$, and $K_L$, $K_1$, $\Psi^{\rm DB}$, $\Psi^{\rm DF}$, and $\Psi^{\rm FB}$ are defined in \eqref{max_st}. 
{Note that for the large caching capacity cases $KL\ge N$ as discussed in Sec. IV-E, the optimal caching allocation follows easily from Propositions \ref{opt_m_finity1} and \ref{opt_m_finity2} with quite similar reasons behind. 
Hence, due to the page limitation, in this subsection we only examine the limited caching capacity case $KL\ll N$.}

The secrecy energy efficiency $\Omega$ in \eqref{max_see} is impacted by various aspects, including the secrecy throughput for each transmission scheme and the MBS/SBS power. All these elements make the caching allocation sophisticated to devise.
Inspired by the fact that the MBS power $P_m$ is typically much larger than the SBS power $P_s$, we focus on a plausible situation $P_m\ge KP_s$ along with a highly concentrated file popularity $\tau >1$. Nevertheless, the design for general cases can be executed similarly, which might incur a tedious classified discussion and a significant computation burden.
The following theorem provides an explicit solution to problem \eqref{max_see}.
{ \begin{theorem}\label{opt_see_theorem}
With hybrid MPC and LCD caching, the optimal allocation of MPC caching $M_E^*$ that maximizes the secrecy energy efficiency $\Omega$ in \eqref{max_see} is given by
\begin{align}\label{opt_m_see}
M_E^{*} 
=\begin{cases}
L, & \Psi^{\rm DF}\ge \Delta\Psi^{\rm B}\xi(L),\\
0, & \Psi^{\rm DF}\le \Delta\Psi^{\rm B}\xi(0),\\
\lceil M^{\circ} \rceil,& \rm otherwise,\\
\end{cases}
\end{align}
where $\Delta\Psi^{\rm B}\triangleq\Delta P_1\Psi^{\rm FB}+\Delta P_2\Psi^{\rm DB}$ denotes the aggregate throughput gain of the DBF and FOT schemes over the BSR scheme, $\xi(M)$ is an increasing function of $M$ given below:
\begin{equation}\label{xi_m}
\xi(M)= \frac{K_1(M+1)^{\tau}}{\Delta P_1(K_L-K_1M)^{\tau}+\Delta P_2K(L+1)}>0,
\end{equation}
and $M^{\circ}$ is the unique root of the equation $\xi(M)=\frac{\Psi^{\rm DF}}{\Delta\Psi^{\rm B}}$.
\end{theorem}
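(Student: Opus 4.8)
The plan is to relax the integer allocation $M$ to a real variable on $[0,L]$ and write \eqref{max_see} as $\Omega(M)=\mathcal{N}(M)/\mathcal{D}(M)$, where $\mathcal{N}(M)\triangleq\Psi^{\rm DB}-\Psi^{\rm DF}(M+1)^{1-\tau}-\Psi^{\rm FB}(K_L-K_1M)^{1-\tau}$ and $\mathcal{D}(M)\triangleq\Delta P_1+\Delta P_2(K_L-K_1M)^{1-\tau}$. Here $\mathcal{D}(M)$ is a positive multiple of $P_{\rm avg}$, hence $\mathcal{D}(M)>0$ on $[0,L]$, and $P_m\ge KP_s$ yields $\Delta P_2=P_m-K_1P_s\ge P_s>0$; thus $\Omega$ is differentiable throughout $[0,L]$. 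The heart of the argument will be to show
\[
\operatorname{sign}\!\left(\frac{d\Omega}{dM}\right)=\operatorname{sign}\!\bigl(\Psi^{\rm DF}-\Delta\Psi^{\rm B}\,\xi(M)\bigr),
\]
with $\xi(M)$ exactly the function in \eqref{xi_m}; once this is in hand, the monotonicity of $\xi$ plus a three-way case split deliver \eqref{opt_m_see}.

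To prove the sign identity, write $a=M+1$ and $b=K_L-K_1M$, so that $a'=1$, $b'=-K_1$, $a\ge1$, $b\ge L+1$, and $\operatorname{sign}(d\Omega/dM)=\operatorname{sign}(\mathcal{N}'\mathcal{D}-\mathcal{N}\mathcal{D}')$. Differentiation gives $\mathcal{N}'=(\tau-1)\bigl[\Psi^{\rm DF}a^{-\tau}-K_1\Psi^{\rm FB}b^{-\tau}\bigr]$ and $\mathcal{D}'=(\tau-1)\Delta P_2K_1b^{-\tau}$. Expanding $\mathcal{N}'\mathcal{D}-\mathcal{N}\mathcal{D}'$, the two monomials proportional to $b^{1-2\tau}$ cancel, and pulling out the positive factor $(\tau-1)a^{-\tau}b^{-\tau}$ leaves the expression
\[
\Psi^{\rm DF}\bigl[\Delta P_1 b^{\tau}+\Delta P_2(b+K_1a)\bigr]-K_1a^{\tau}\bigl(\Delta P_1\Psi^{\rm FB}+\Delta P_2\Psi^{\rm DB}\bigr).
\]
Now $b+K_1a=(K_L-K_1M)+K_1(M+1)=KL+K=K(L+1)$ and, by definition, $\Delta\Psi^{\rm B}=\Delta P_1\Psi^{\rm FB}+\Delta P_2\Psi^{\rm DB}$, so the above becomes $\bigl[\Delta P_1 b^{\tau}+\Delta P_2K(L+1)\bigr]\bigl(\Psi^{\rm DF}-\Delta\Psi^{\rm B}\xi(M)\bigr)$; multiplying $\mathcal{D}(M)>0$ by $b^{\tau}$ shows $\Delta P_1 b^{\tau}+\Delta P_2 b>0$, and adding the nonnegative term $\Delta P_2K_1a$ confirms $\Delta P_1 b^{\tau}+\Delta P_2K(L+1)>0$, which yields the claimed sign identity. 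I expect this bookkeeping, namely tracking the surviving monomials, spotting the $b^{1-2\tau}$ cancellation, and recognizing the two algebraic identities that collapse the expression into $\xi(M)$, to be the only genuine obstacle.

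The final step is to read off the optimizer. A quotient-rule computation, using $\Delta P_1 b^{\tau-1}+\Delta P_2>0$ (obtained by multiplying $\mathcal{D}(M)>0$ by $b^{\tau-1}$), shows $\xi'(M)>0$, i.e., $\xi$ is strictly increasing on $[0,L]$. Hence: if $\Psi^{\rm DF}\ge\Delta\Psi^{\rm B}\xi(L)$ then $\Psi^{\rm DF}-\Delta\Psi^{\rm B}\xi(M)\ge0$ for every $M\in[0,L]$ (which automatically includes the case $\Delta\Psi^{\rm B}\le0$, since $\Psi^{\rm DF}>0$), so $\Omega$ is nondecreasing and $M_E^{*}=L$; symmetrically, if $\Psi^{\rm DF}\le\Delta\Psi^{\rm B}\xi(0)$ then necessarily $\Delta\Psi^{\rm B}>0$ and $\Omega$ is nonincreasing, so $M_E^{*}=0$. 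Otherwise $\Delta\Psi^{\rm B}>0$ and $\xi(0)<\Psi^{\rm DF}/\Delta\Psi^{\rm B}<\xi(L)$, so by continuity and strict monotonicity of $\xi$ there is a unique $M^{\circ}\in(0,L)$ with $\xi(M^{\circ})=\Psi^{\rm DF}/\Delta\Psi^{\rm B}$; then $d\Omega/dM>0$ on $[0,M^{\circ})$ and $d\Omega/dM<0$ on $(M^{\circ},L]$, so $\Omega$ is unimodal with continuous maximizer $M^{\circ}$. Since the feasible allocations are integers and $\Omega$ is unimodal, the integer optimum lies in $\{\lfloor M^{\circ}\rfloor,\lceil M^{\circ}\rceil\}$, and taking $\lceil M^{\circ}\rceil$ recovers \eqref{opt_m_see}.
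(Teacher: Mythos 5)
Your proposal is correct and follows essentially the same route as the paper's Appendix C: compute $d\Omega/dM$ by the quotient rule, observe that after the $b^{1-2\tau}$ cancellation the sign is governed by $Y(M)=I(M)\Psi^{\rm DF}-K_1(M+1)^{\tau}\Delta\Psi^{\rm B}$ (your bracketed expression is exactly this, and dividing by $I(M)>0$ gives the $\xi(M)$ form), then split into three cases via monotonicity and the boundary values at $M=0$ and $M=L$. The only cosmetic difference is that you establish $I(M)>0$ and $\xi'(M)>0$ directly from the positivity of the denominator $\mathcal{D}(M)$ and $\Delta P_2\ge P_s>0$, whereas the paper invokes $(N+1)^{1-\tau}\approx 0$ to get $\Delta P_1>0$; both suffice.
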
}
\begin{proof}
	Please refer to Appendix \ref{appendix_opt_see_theorem}.
\end{proof}

Theorem \ref{opt_see_theorem} provides quite a few interesting insights into the optimal caching allocation:

1) $M_E^* = L$ always holds if $\Delta\Psi^{\rm B}\le 0$. 
The condition $\Delta\Psi^{\rm B}\le 0$ is equivalent to 
\begin{equation}
\frac{P_m+P_s}{KP_s}\le\frac{\Psi^{\rm DF}+\Psi^{\rm B}\left[1-(N+1)^{1-\tau}\right]}{\Psi^{\rm DF}+\Psi^{\rm F}\left[1-(N+1)^{1-\tau}\right]}.
\end{equation}
 This indicates that, when the ratio $(P_m+P_s)/(KP_s)$ (or $\Psi^{\rm B}/\Psi^{\rm F}$) of the BSR scheme over the FOT scheme is lower (or larger) than a certain threshold, it is more conductive to employ MPC caching. This is because, compared to the FOT scheme, in this circumstance the BSR scheme can contribute significantly to improving the secrecy energy efficiency, since it achieves a favorable throughput while not consuming too much power. Hence, we are supposed to increase the probability of adopting the BSR scheme, i.e., activating the MPC caching mode alone.
  
 2) When the BSR can only provide a slight or even no benefit for the secrecy energy efficiency, e.g., an overly high MBS power or low secrecy throughput, the optimal caching allocation is dominated by the throughput gap $\Psi^{\rm DF}$ between the DBF and FOT schemes. To be more precise: i) MPC caching is more rewarding for a significant throughput gap $\Psi^{\rm DF}\ge \Delta\Psi^{\rm B}\xi(L)$; ii) LCD caching becomes preferred for a marginal throughput gap $\Psi^{\rm DF}\le \Delta\Psi^{\rm B}\xi(0)$; iii) for a moderate throughput gap, hybrid caching is necessary for guaranteeing a high level of secrecy energy efficiency. The fundamental reason is that, through adjusting the allocation between MPC and LCD caching, we can strike a superior balance between throughput and power consumption.
 
 3) Although the optimal $M^{\circ}$ does not appear in an explicit form in \eqref{opt_m_see}, we can prove that 
 \begin{equation}
 d M^{\circ}/d\tau=-\frac{\partial Y(M^{\circ})/\partial \tau}{\partial Y(M^{\circ})/\partial M^{\circ}}<0,
 \end{equation}
 by invoking the derivative rule for implicit functions \cite{Zheng2015Multi} with the equation $Y(M^{\circ})=0$, where $Y(M)$ is defined in Appendix \ref{appendix_opt_see_theorem}.
 This suggests that, we should decrease the allocation of MPC caching as the file popularity becomes increasingly concentrated (i.e., a larger $\tau$), which is as opposed to the growth trend observed in Fig. \ref{OPT_MT}. 
 
\begin{figure}[!t]
	\centering
	\includegraphics[width = 3.0in]{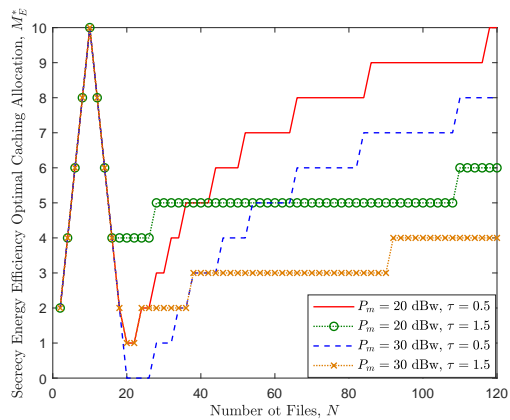}
	\caption{$M_E^*$ vs. $N$ for different values of $P_m$ and $\tau$, with $K=2$, $P_s = 10$ dBw, $\lambda_e=0.01$, $\epsilon=0.2$, and $L = 10$.  }
	\label{OPT_ME}
\end{figure}
{Fig. \ref{OPT_ME} presents the optimal allocation of MPC caching $M_E^*$ versus the number of files $N$. 
We observe that for the large caching capacity case $KL\ge N$, $M_E^*$ first increases linearly and then decreases linearly with $N$, which is similar to Fig. \ref{OPT_MT}. What is different is that $M_E^*$ increases continuously with $N$, since a larger portion of cache resources should be devoted for MPC caching  to maintain a high secrecy throughput. 
We also show that $M_E^*$ decreases with the MBS power $P_m$. 
The underlying reason is that, to counter-balance the increasing backhaul power consumption, a larger portion of cache resources should be reserved for LCD caching to increase the cache hit probability such that the negative impact of backhaul can be mitigated. This manifests the necessity of caching for energy-efficient wireless networks when taking into consideration the backhaul power consumption and delay. Interestingly, it is found that $M_E^*$ increases with the file popularity skewness $\tau$ at the small $N$ regime whereas decreases with $\tau$ at the large $N$ regime. 
This can be explained as follow: for a small $N$, as the file popularity becomes more concentrated (i.e., a larger $\tau$), the advantage from caching more different files is limited and MPC caching is more favorable to increase the secrecy throughput. However, when $N$ is sufficiently large, the benefit from MPC caching in throughput improvement is outweighed by the growing power consumption, whereas increasing the allocation of LCD caching can lower the consumed power and is thus considerably more energy efficient.}

Fig. \ref{MAX_SEE} compares our proposed hybrid caching policy with two conventional approaches solely employing MPC and LCD caching, respectively. Clearly, hybrid caching can always provide the highest secrecy energy efficiency $\Omega^*$, since it can balance secrecy throughput well with power consumption.  
We show that $\Omega^*$ initially increases and then decreases with the SBS power $P_s$. This is because, a too small $P_s$ leads to a low overall secrecy throughput whereas a large $P_s$ results in a high consumed power; both aspects will impair the secrecy energy efficiency.
Additionally, it is as expected that increasing either the number of SBSs $K$ or each SBS's caching capacity $L$ is beneficial for enhancing the secrecy energy efficiency.
From the last two subgraphs with equal total caching capacity, i.e., $KL = 30$, it is surprise to find that equipping a large-capacity cache unit may be more appealing than deploying more SBSs. We can attribute this phenomenon to two causes: a larger $L$ is more rewarding for increasing the cache hit probability, whereas a larger $K$ results in a higher power consumption. This highlights the superiority of wireless caching in improving energy efficiency for cellular networks. 

\begin{figure}[!t]
	\centering
	\includegraphics[width = 2.97in]{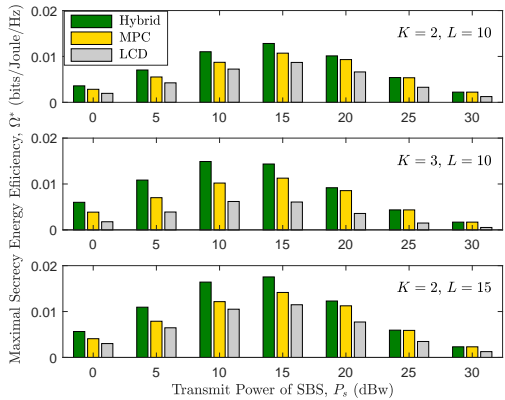}
	\caption{$\Omega^*$ vs. $P_s$ for different values of $K$ and $L$, with $P_m = 30$ dBw,  $\lambda_e=0.01$,  $\epsilon=0.3$, $N = 100$, and $\tau = 1.5$.}
	\label{MAX_SEE}
\end{figure}

\section{Conclusions}
In this paper, we explore the potential of physical-layer security in cache-enabled heterogeneous cellular networks coexisting with PPP distributed eavesdroppers. 
We analyze and optimize the secrecy throughput and secrecy energy efficiency under a hybrid MPC and LCD caching policy along with cooperative transmissions between the MBS and SBSs. 
We provide various interesting insights into the behavior of the optimal transmission rates and the allocation of hybrid caching. We reveal that hybrid caching can always outperform the exclusive use of either MPC or LCD caching in terms of both secrecy throughput and energy efficiency performance. 

\appendix
\subsection{Proof of Theorem \ref{theorem_pco_dbf}}
\label{appendix_theorem_pc_db}
Recall the COP $\mathcal{O}_{co}$ given in \eqref{pco_dbf} and let $x_k =  |h_{s_k,o}|r_{s_k,o}^{-\alpha/2}$, then $\mathcal{O}_{co}$ can be rewritten as 
\begin{align}\label{pc_db_app}
\mathcal{O}_{co}
&=\mathbb{P}\left\{\sum_{k=1}^K x_k<\sqrt{\beta_t/P_s}\right\}.
\end{align}
Note that $x_k$ obeys the Rayleigh distribution with the probability density function (PDF)  $f_{x_k}(x_k)=2{r_{s_k,o}^{\alpha}}{x_k}e^{-r_{s_k,o}^{\alpha}{x_k^2}}$. Due to the independence among $\{x_k\}_{k=1}^K$, the joint PDF can be given by 
\begin{equation}\label{joint_pdf}
f_{x_1,\cdots,x_K}(x_1,\cdots,x_K) =\prod_{k=1}^K 2{r_{s_k,o}^{\alpha}}{x_k}e^{-{r_{s_k,o}^{\alpha}x_k^2}}.
\end{equation}
Substituting \eqref{joint_pdf} into \eqref{pc_db_app} and 
changing the variable $x_k\rightarrow \sqrt{\beta_s/P_s}y_k$ arrives at \eqref{pco_dbf_exact}.

\subsection{Proof of Theorem \ref{theorem_opt_m2}}\label{appendix_theorem_opt_m2}
Treat $M$ as a continuous variable, and we can compute the derivative $\frac{d\bar\Psi}{dM}$ from \eqref{max_st}, 
\begin{equation}\label{dt_cm}
\frac{d\bar\Psi}{dM} 
= \frac{(\tau-1)(M+1)^{-\tau}}{1-(N+1)^{1-\tau}}G(M),
\end{equation}
where $G(M) =\Psi^{\rm DF}-\Psi^{\rm FB}K_1\left(\frac{M+1}{K_L-K_1M}\right)^{\tau} $.
The sign of $\frac{d\bar\Psi}{dM}$ is consistent with that of $G(M)$ and is closely related to the sign of $\Psi^{\rm FB}$. 
Since $\Psi^{\rm DF}>0$, if $\Psi^{\rm FB}\le 0$, we have $G(M)>0$. Hence, $\bar{\Psi}$ increases with $M$ and reaches the maximum at $M=L$.
For the situation  $\Psi^{\rm FB}>0$ such that $G(M)$ decreases with $M$, we distinguish the following three cases.

 Case 1: If $G(L)>0$, i.e., ${\Psi^{\rm DF}}>(K-1){\Psi^{\rm FB}}$, $\frac{d\bar\Psi}{dM}$ is positive, which indicates that $\bar\Psi$ is an increasing function of $M$. Hence, the maximal $\bar\Psi$ is achieved at $M=L$. 
 
Case 2: If $G(0)<0$, i.e., ${\Psi^{\rm DF}}<\frac{K-1}{(KL+1)^{\tau}}{\Psi^{\rm FB}}$, $\frac{d\bar\Psi}{dM}$ is negative, which indicates that $\bar\Psi$ is a decreasing function of $M$. Hence, the maximal $\bar\Psi$ is achieved at $M=0$. 

Case 3: If $G(L)\le 0\le G(0)$, i.e., $\frac{K-1}{\left(KL+1\right)^{\tau}}\le\frac{\Psi^{\rm DF}}{\Psi^{\rm FB}}\le K-1$, $\frac{d\bar\Psi}{dM} $ is first positive and then negative as $M$ increases, i.e., $\bar\Psi$ first increases and then decreases with $M$.
Hence, the maximal $\bar\Psi$ is achieved at the zero-crossing of $\frac{d\bar\Psi}{dM} $. Solving the equation $\frac{d\bar\Psi}{dM^*}=0 $ completes the proof.

\subsection{Proof of Theorem \ref{opt_see_theorem}}\label{appendix_opt_see_theorem}
We start by calculating the derivative $\frac{d\Omega}{dM}$ from \eqref{max_see},
\begin{equation}
\frac{d\Omega}{dM} = \frac{(\tau-1)(M+1)^{-\tau}(K_L-K_1M)^{-\tau}}{\left[\Delta P_1+\Delta P_2(K_L\!-\!K_1M)^{1-\tau}\right]^2}Y(M),
\end{equation}
where $Y(M)=I(M)\Psi^{\rm DF}-K_1(M+1)^{\tau}\Delta \Psi^{\rm B}$, with $I(M)=\Delta P_1(K_L-K_1M)^{\tau}+\Delta P_2K(L+1)>0$ and $\Delta \Psi^{\rm B}$ defined in \eqref{opt_m_see}.
Considering $N\gg 1$ and $\tau>1$, we have $(1+N)^{1-\tau}=0$ such that $\Delta P_1>0$ and $I(M)$ decreases with $M$.
Since $\Psi^{\rm DF}>0$, if $\Delta \Psi^{\rm B}\le0$, we have $Y(M)>0$. Hence, $\Omega$ increases with $M$ and reaches the maximum at $M=L$.
For the case $\Delta \Psi^{\rm B}>0$, $Y(M)$ is a decreasing function of $M$. Then, the proof can be completed following Appendix \ref{appendix_theorem_opt_m2}.

\end{document}